\documentclass{sig-alternate}
\pdfoutput=1

\usepackage{lmodern}
\usepackage[utf8]{inputenc}
\usepackage[TS1,T1]{fontenc}
\usepackage[usenames,dvipsnames,svgnames,table]{xcolor}
\usepackage{url}
\usepackage{graphicx}
\DeclareGraphicsExtensions{.eps}
\usepackage{latexsym}
\usepackage{xspace}
\usepackage{enumitem}
\usepackage{subcaption}
\usepackage{dblfloatfix}
\usepackage{flushend}
\usepackage{algorithm}
\usepackage{algorithmic}
\usepackage{amsmath,nccmath}
\usepackage{bm}
\usepackage{listings}
\usepackage{cleveref}

\setenumerate{noitemsep,topsep=0pt,parsep=0pt,partopsep=0pt}

\newtheorem{definition}{Definition}[section]

\newtheorem{theorem}{Theorem}


\newcommand*{\eat}[1]{}

\newcommand*{\ql}{\textsf{Portal}\xspace}
\newcommand*{\tg}{\textsf{TGraph}\xspace}
\newcommand*{\tgs}{\textsf{TGraphs}\xspace}

\newcommand*{\sg}{RG\xspace}
\newcommand*{\rg}{RG\xspace}

\newcommand*{\og}{OG\xspace}

\newcommand*{\hg}{HG\xspace}

\def\bp{\mathbf{p}}
\def\bc{\mathbf{c}}

\newcommand*{\edgec}{\textsf{edge-complete}\xspace}
\newcommand*{\vertexc}{\textsf{vertex-complete}\xspace}
\newcommand*{\edgeq}{\textsf{edge-query}\xspace}
\newcommand*{\vertexq}{\textsf{vertex-query}\xspace}

\newcommand*{\ttt}{\ensuremath{\mathsf{T}}\xspace}

\newcommand*{\tve}{\ensuremath{\mathsf{T}}\xspace}

\newcommand*{\tv}{\textsf{TV}\xspace}
\newcommand*{\te}{\textsf{TE}\xspace}

\newcommand*{\tav}{\ensuremath{\mathsf{TA^{V}}}\xspace}
\newcommand*{\tae}{\ensuremath{\mathsf{TA^{E}}}\xspace}
\newcommand*{\tga}{TGA\xspace}
\newcommand*{\tra}{TRA\xspace}


\newcommand*{\slice}[2]{\ensuremath{\tau_{#1}(#2)}}
\newcommand*{\subv}[2]{\ensuremath{\textsf{sub}^T_v(#1,#2)}}
\newcommand*{\sube}[2]{\ensuremath{\textsf{sub}^T_e(#1,#2)}}
\newcommand*{\vmap}[1]{\ensuremath{\textsf{map}^T_v(#1)}}
\newcommand*{\emap}[1]{\ensuremath{\textsf{map}^T_e(#1)}}

\newcommand*{\coal}[1]{\ensuremath{\mathcal{C}(#1)}}

\newcommand*{\constr}[2]{\ensuremath{\mathcal{K}(#1,#2)}}
\newcommand*{\resolve}[2]{\ensuremath{\mathcal{R}(#1,#2)}}
\newcommand*{\wsplit}[3]{\ensuremath{\mathcal{S}(#1,#2,#3)}}

\newcommand{\insql}[1]{\textsf{#1}}

\def\ojoin{\setbox0=\hbox{$\Join$}%
\rule[0.2ex]{.25em}{.4pt}\llap{\rule[1.5ex]{.25em}{.4pt}}}

\def\rightouterjoin{\mathbin{\Join\mkern-5.8mu\ojoin}}
\def\fullouterjoin{\mathbin{\ojoin\mkern-5.8mu\Join\mkern-6mu\ojoin}} 

\newcolumntype{L}[1]{>{\raggedright\let\newline\\\arraybackslash\hspace{0pt}}p{#1}}


\usepackage{etoolbox}
\makeatletter
\patchcmd{\maketitle}{\@copyrightspace}{}{}{}
\makeatother

\begin{document}

\title{Querying Evolving Graphs with Portal}

\numberofauthors{2}
\author{
  \alignauthor Vera Zaychik Moffitt\\
  \affaddr{Drexel University}\\
  \email{zaychik@drexel.edu}  \and
  \alignauthor Julia Stoyanovich\\
  \affaddr{Drexel University}\\
  \email{stoyanovich@drexel.edu}  \\
}

\maketitle

\thispagestyle{empty}

\begin{abstract}

Graphs are used to represent a plethora of phenomena, from the Web and
social networks, to biological pathways, to semantic knowledge
bases. Arguably the most interesting and important questions one can
ask about graphs have to do with their evolution. Which Web pages are
showing an increasing popularity trend? How does influence propagate
in social networks? How does knowledge evolve?  

This paper proposes a logical model of an evolving graph called a \tg,
which captures evolution of graph topology and of its vertex and edge
attributes.  We present a compositional temporal graph algebra \tga,
and show a reduction of \tga to temporal relational algebra with
graph-specific primitives.  We formally study the properties of \tga,
and also show that it is sufficient to concisely express a wide range
of common use cases.  We describe an implementation of our model and
algebra in \ql, built on top of Apache Spark / GraphX.  We conduct
extensive experiments on real datasets, and show that \ql scales.

\end{abstract}

\section{Introduction}
\label{sec:intro}

The importance of networks in scientific and commercial domains cannot
be overstated.  Considerable research and engineering effort is being
devoted to developing effective and efficient graph representations
and analytics.  Efficient graph abstractions and analytics for {\em
  static graphs} are available to researchers and practitioners in
scope of open source platforms such as Apache Giraph, Apache Spark /
GraphX~\cite{DBLP:conf/osdi/GonzalezXDCFS14} and GraphLab /
PowerGraph~\cite{DBLP:conf/osdi/GonzalezLGBG12}.

Arguably the most interesting and important questions one can ask
about networks have to do with their evolution, rather than with their
static state.  Analysis of {\em evolving graphs} has been receiving
increasing
attention~\cite{DBLP:journals/csur/AggarwalS14,Chan2008,Kan2009,Miao2015,Ren2011,Semertzidis2015}.
Yet, despite much recent activity, and despite increased variety and
availability of evolving graph data, systematic support for scalable
querying and analysis of evolving graphs still lacks.  This support is
urgently needed, due first and foremost to the scalability and
efficiency challenges inherent in evolving graph analysis, but also to
considerations of usability and ease of dissemination.  The goal of
our work is to fill this gap.  In this paper, we present an algebraic
query language called \tg algebra, or \tga, and its scalable
implementation in \ql, an open-source distributed framework on top of
Apache Spark.

Our goal in developing \tga is to give users an ability to concisely
express a wide range of common analysis tasks over evolving graphs,
while at the same time preserving inter-operability with temporal
relational algebra (\tra).  Implementing (non-temporal) graph querying
and analytics in an RDBMS has been receiving renewed
attention~\cite{DBLP:conf/sigmod/AbergerTOR16,DBLP:conf/sigmod/SunFSKHX15,DBLP:journals/pvldb/Xirogiannopoulos15},
and our work is in-line with this trend. Our data model is based on
the temporal relational model, and our algebra corresponds to temporal
relational algebra, but is designed specifically for evolving graphs.

We represent graph evolution, including changes in topology and in
attribute values of vertices and edges, using the \tg abstraction ---
a collection of temporal SQL relations with appropriate integrity
constraints.  An example of a \tg is given in Figure~\ref{fig:tg_ve},
where we show evolution of a co-authorship network.

\tga can be viewed as \tra for graphs, and so does not support general
recursion or transitive closure computation. (Although, as we will see
in Section~\ref{sec:analytics}, Pregel-style graph analytics such as
PageRank are supported as an extension.)  For this reason, we also do
not support regular path queries (RPQ) or the more general path query
classes (CRPQ and ECRPQ).  Extending our formalism with recursion and
path queries is non-trivial, and we leave this to future work.

Rather than focusing on path computation and graph traversal, we
stress the tasks that perform whole-graph analysis over time.  Several
such tasks are described next.  Additional examples can be found in
SocialScope~\cite{Amer-Yahia2009} --- a closed non-temporal algebra
for multigraphs that is motivated by information discovery in social
content sites.  It is not difficult to show that the graph (rather
than multigraph) versions of all SocialScope operations can be
expressed, and augmented with the temporal dimension, in \tga.

\subsection{Use cases and algebra by example}
\label{sec:cases}

An interaction network is one typical kind of an evolving graph.  It
represents people as vertices, and interactions between them such as
messages, conversations and endorsements, as edges.  Information
describing people and their interactions is represented by vertex and
edge attributes.  One easily accessible interaction network is the
wiki-talk dataset (\url{http://dx.doi.org/10.5281/zenodo.49561}),
containing messaging events among Wikipedia contributors over a
13-year period.  Information available about the users includes their
username, group membership, and the number of Wikipedia edits they
made.  Messaging events occur when users post on each other's talk
pages.

We now present common analysis tasks that motivate the operators of
our algebra, \tga. 

{\bf Vertex influence over time.} In an interaction graph, vertex
centrality is a measure of how important or influential people are.
Over a dozen different centrality measures exist, providing indicators
of how much information ``flows'' through the vertex or how the vertex
contributes to the overall cohesiveness of the network.  Vertex
importance fluctuates over time.  To see whether the wiki-talk graph
has high-importance vertices, and how stable vertex importance is over
time during a particular period of interest, we can look at a subset
of the graph that corresponds to the period of interest, compute an
importance measure, such as in-degree, for each vertex and for each
point in time, and finally calculate the coefficient of variation per
vertex.

{\bf Question:} What are the high-influence nodes over the past 5
years, and is their influence persistent over time?

\begin{enumerate}[noitemsep,itemindent=\dimexpr\labelwidth+\labelsep\relax,leftmargin=0pt]
\item Select a subset of the data representing the 5 years of
  interest, using a common temporal operator slice($\tau$):\\
$\ttt_1 = \slice{[2010,2015)}{wikitalk}$

\item Compute in-degree (prominence) of each vertex during each time
  point.  This is an example of the {\em aggregation} operation, a
  common operation on non-temporal graphs, as defined by the taxonomy
  of Wood~\cite{Wood2012}.  Aggregation computes a value for each
  vertex based on its neighbors.  SocialScope~\cite{Amer-Yahia2009} is
  one of the languages that proposes an aggregation operation and
  demonstrates its many uses.  We introduce a temporal version of
  aggregation (listed here with default arguments omitted for
  readability):\\
$\ttt_2 = \insql{agg}^T(\mathsf{dir=right},\mathsf{f_m=1},\mathsf{f_a=count},\mathsf{pname=deg},\ttt_1)$

\item Aggregate degree information per vertex across the timespan
  of $\ttt_2$, collecting values into a map.  This is an example of
  aggregation based on temporal window, which we implement with the
  temporal node creation operator:\\
$\ttt_3 =\insql{node}^T_w(\mathsf{w=lifetime},\mathsf{f_v=\{map(deg)\}},\ttt_2)$

\item Transform the attributes of each vertex to compute the
  coefficient of variation from the map of degree values, using the
  temporal vertex-map operator:\\
$\ttt_4 = \vmap{\mathsf{f_v=stdev(deg)/mean(deg)*100}, \ttt_3}$
\end{enumerate}

{\bf Graph centrality over time.} Graph centrality is a popular
measure that is used to evaluate how connected or centralized the
community is.  This measure can be computed by aggregating in-degree
values of graph vertices and may change as communication patterns
evolve, or as high influencers appear or disappear. In sparse
interaction graphs there is an additional question of temporal
resolution to consider: if two people communicated on May 16, 2010,
how long do we consider them to be connected?  We now show how graph
centrality can be computed over time, with control for temporal
resolution.

{\bf Question:} How has graph centrality changed over time?

\begin{enumerate}[noitemsep,,itemindent=\dimexpr\labelwidth+\labelsep\relax,leftmargin=0pt]
\item Compute a temporally aggregated view of the graph into 2-months
  windows.  Each window will include vertices and edges that
  communicate frequently: a vertex and an edge are each present during
  a 2-month window if they exist in every snapshot during that period.
  We use the window-based node creation operation.\\
$\ttt_1 = \insql{node}^T_w(\mathsf{w=2~mon},\mathsf{q_v=always},\mathsf{q_e=always},wikitalk)$

\item Compute in-degree of each vertex:\\
$\ttt_2 = \insql{agg}^T(\mathsf{dir=right},\mathsf{f_m=1},\mathsf{f_a=count},\mathsf{pname=deg},\ttt_1)$

\item Create a new graph, in which all vertices that are present at a
  given time point (snapshot) are grouped into a single vertex.
  Accumulate maximum, sum and count of the values of $deg$ as
  properties at that vertex.  We implement this with the
  attribute-based node creation operation.  Creating a single vertex
  to represent the whole graph is one use of node creation.  We will
  show that node creation is useful for other types of analysis. $\ttt_3 =$\\
  $\insql{node}^T_a(\mathsf{g=1,f_v=\{max(deg),~sum(deg),~count(deg)\}},\ttt_2)$

\item Compute degree centrality at each time point.\\
$\ttt_4 = \vmap{\mathsf{f_v=(max*cnt-sum)/(cnt^2-3*cnt+2)}, \ttt_3}$

\end{enumerate}

{\bf Communities over time.} Interaction networks are sparse because
edges are so short-lived.  As part of exploratory analysis, we can
consider the network at different temporal resolutions, run a
community detection algorithm, e.g., compute the connected components
of the network, and then consider the number of and size of connected
components.

{\bf Question:} In a sparse communication network, on what time scale
can we detect communities?

\begin{enumerate}[noitemsep,itemindent=\dimexpr\labelwidth+\labelsep\relax,leftmargin=0pt]
\item Aggregate the graph into 6-month windows.\\
$\ttt_1 = \insql{node}^T_w(\mathsf{w=6~mon},\mathsf{q_v=always},\mathsf{q_e=always},wikitalk)$

\item Compute connected components at each time point.  This is an
  example of a Pregel-style analytic invocation over an evolving
  graph. $\ttt_2 = \insql{pregel}^T_{cc} (\mathsf{pname=comp}, \ttt_1)$

\item Generate a new graph, in which a vertex corresponds to a
  connected component, and compute the size of the connected
  component. $\ttt_3 =
  \insql{node}^T_a(\mathsf{g=comp,f_v=count(1)},\ttt_2)$

\item Filter out vertices that represent communities too small to be
  useful (e.g., of 1-2 people).  This is an example of vertex
  subgraph.  $\ttt_4 = \subv{\mathsf{v.a.count > 2}}{\ttt_3}$

\end{enumerate}

In Section~\ref{sec:algebra} we formally define the operators of our
graph algebra.  In Section~\ref{sec:exp} we return to these three use
cases.

\subsection{Contributions and roadmap}

 We propose a representation of an evolving graph, called a \tg, which
 captures the evolution of both graph topology and vertex and edge
 attributes (Section~\ref{sec:model}), and develop a compositional \tg
 algebra, \tga (Section~\ref{sec:algebra}).
We show a reduction from \tga to temporal relational algebra
  \tra, using a combination of standard operators and \tg-specific
  primitives, and present formal properties of \tga
  (Section~\ref{sec:formal}).
We present an implementation in scope of the \ql system, built
  on Apache Spark / GraphX~\cite{DBLP:conf/osdi/GonzalezXDCFS14}. \ql
  supports several access methods that correspond to different
  trade-offs in temporal and structural locality
  (Section~\ref{sec:sys}).
We conduct an extensive experimental evaluation with real datasets,
demonstrating that \ql scales (Section~\ref{sec:exp}).  We also
illustrate the usability throughout the paper, with a variety of
real-life analysis tasks that can be concisely expressed in \tga.

\section{Data Model}
\label{sec:model}

Following the SQL:2011
standard~\cite{DBLP:journals/sigmod/KulkarniM12}, a period (or
interval) $\bp = [s, e)$ represents a discrete set of time instances,
  starting from and including the start time $s$, continuing to but
  excluding the end time $e$.  Time instances contained within the
  period have limited precision, and the time domain has total order.

\begin{figure}[t!]
\centering
\includegraphics[width=3in]{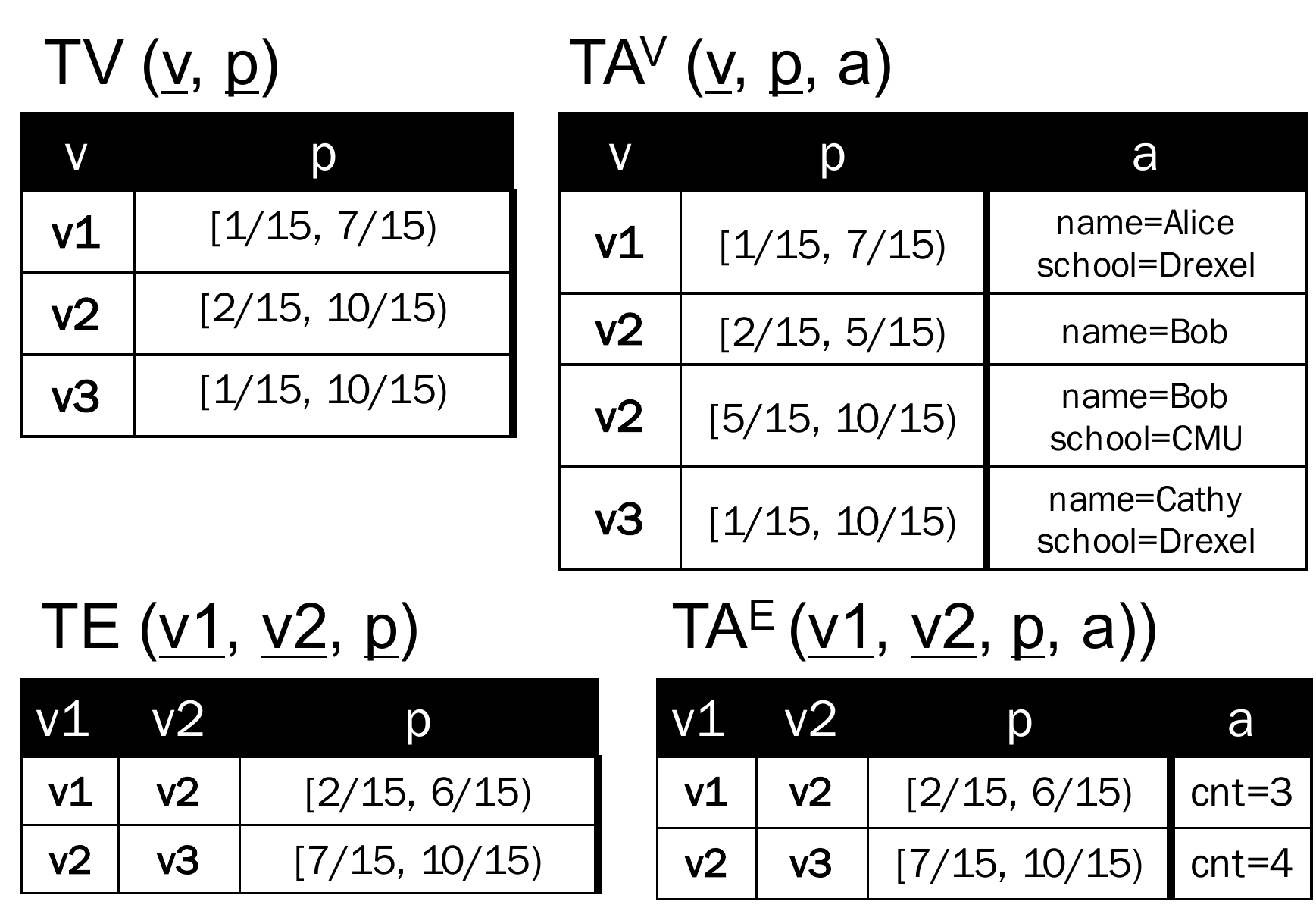}
\caption{\tg \insql{T1}.}
\label{fig:tg_ve}
\end{figure}

We now describe the logical representation of an evolving graph,
called a \tg.  A \tg represents a single graph, and models evolution
of its topology and of vertex and edge attributes.
Figure~\ref{fig:tg_ve} gives an example of a \tg that shows evolution
of a co-authorship network.

A \tg is represented with four temporal SQL
relations~\cite{DBLP:conf/vldb/BohlenSS96}, and uses point
semantics~\cite{DBLP:reference/db/Toman09}, associating a fact
(existence of a vertex or edge, and an assignment of a value to a
vertex or edge attribute) with a time point.  We use periods to
compactly represent their constituent time points.  This is a common
representation technique, which does not add expressive power to the
data model~\cite{DBLP:conf/ictl/Chomicki94}.

A snapshot of a temporal relation $R$, denoted $\tau^s_c(R)$ (``s''
stands for ``snapshot''), is the state of $R$ at time point $c$.

We use the property graph model~\cite{GraphDB} to represent vertex and
edge attributes: each vertex and edge during period $\bp$ is
associated with a (possibly empty) {\em set} of properties, and each
property is represented by a key-value pair.  Property values are not
restricted to be of atomic types, and may, e.g., be sets, maps or
tuples.

We now give a formal definition of a \tg.

\begin{definition}[TGraph]
A \tg is a pair $\tve=(\tv, \te)$. \tv is a valid-time temporal SQL
relation with schema $\tv(\underline{v}, \underline{\bp})$ that
associates a vertex with the time period during which it is
present. \te is a valid-time temporal SQL relation with schema
$\te(\underline{v_1}, \underline{v_2}, \underline{\bp})$, connecting
pairs of vertices from \tv.  
\tve optionally includes vertex and edge attribute relations
$\tav(\underline{v}, \underline{\bp}, a)$ and $\tae(\underline{v_1},
\underline{v_2}, \underline{\bp}, a)$.  
Relations of \tve must meet the following requirements:

\begin{description}[noitemsep]
\item [R1: Unique vertices/ edges] In every snapshot $\tau^s_c
  (\tav)$ and $\tau^s_c (\tae)$, where $c$ is a time point, a
  vertex/edge exists at most once.
\item [R2: Unique attribute values] In every snapshot $\tau^s_c
  (\tav)$ and $\tau^s_c (\tae)$, a vertex/edge is associated with at
  most one attribute (which is itself a set of key-value pairs
  representing properties).
\item [R3: Referential integrity] In every snapshot $\tau^s_c (\tve)$,
  foreign key constraints hold from $\tau^s_c (\te)$ (on both $v_1$
  and $v_2$) and $\tau^s_c (\tav)$ to $\tau^s_c (\tv)$, and from
  $\tau^s_c (\tae)$ to $\tau^s_c (\te)$.
\item [R4: Coalesced] Value-equivalent tuples in all relations of \tve
  with consecutive or overlapping time periods are merged.  
\end{description}
\label{def:tg}
\vspace{-0.5cm}
\end{definition}

Requirements {\bf R1, R2, R3} guarantee soundness of the \tg data
structure, ensuring that every snapshot of a \tg is a valid graph.
Requirement {\bf R4} avoids semantic ambiguity and ensures correctness
of algebraic operations in point-stamped temporal models such as
ours~\cite{DBLP:reference/db/JensenS09k}.

Graphs may be directed or undirected.  For undirected graphs we choose
a canonical representation of an edge, with $v_1 \leq v_2$ (self-loops
are allowed).  Because we use the source and destination vertex id
pair as the identifier for the edges, at most two edges can exist
between any two vertices (one in each direction) at any time point.
That is, we do not support multigraphs.

In the \tg representation of Definition~\ref{def:tg}, vertex and edges
attributes are stored as collections of properties.  That said,
Definition~\ref{def:tg} presents a logical data structure that admits
different physical representations, including, e.g., a columnar
representation (each property in a separate relation, supporting
different change rates), by a hash-based representation
of~\cite{DBLP:conf/sigmod/SunFSKHX15}, or in some other way. We leave
an experimental comparison of different physical representations of
vertex and edge attributes to future work.

Our choice to use attribute relations is in contrast to representing
vertex and edge attributes as part of \tv and \te.  The main reason is
to streamline the enforcement of referential integrity constraints of
Definition~\ref{def:tg}.  Consider again the example in
Figure~\ref{fig:tg_ve}.  If vertex attributes were stored as part of
\tv, then there would be two tuples for $v_2$ in this relation whose
validity periods overlap with that of edge $e(v_1, v_2)$ --- one for
each $[2/15, 5/15)$ and $[5/15, 7/15)$.  This would in turn require
    that $e(v_1, v_2)$ be mapped to two tuples in \tv as part of
    referential integrity checking on $v_2$.  Matching a tuple with a
    set of tuples in the referenced table, while supported by the
    SQL:2011 standard, is potentially inefficient, and we avoid it in
    our representation.
Another reason for storing \tav and \tae separately is that in many
cases we are interested in applying operations (e.g., analytics) only
to graph topology, and in that case \tv and \te are sufficient.

\section{Algebra}
\label{sec:algebra}
\setlength{\textfloatsep}{5pt}

\tg algebra, or \tga for short, is compositional: operators take a \tg
or a pair of \tgs as input, and output a \tg.  We specify the
semantics of \tga by showing a translation of each operator into a
sequence of temporal relational algebra (\tra) expressions (with
nesting, to accommodate non-1NF vertex/edge attributes).  Using this
translation one can implement \tga in a temporal DBMS, guaranteeing
snapshot reducibility and extended snapshot
reducibility~\cite{DBLP:reference/db/Bohlen092} --- two properties
that are appropriate for a point-based temporal data model.

\tra algebra extends relational algebra by specifying how operators
are applied to temporal relations such that snapshot reducibility
property is guaranteed.  Additionally, explicit references to time are
supported in operator predicates (extended snapshot reducibility), but
the time stamps are not manipulated by the user queries directly.

In Section~\ref{sec:algebra:integrity} we present the primitives that
are needed to enforce soundness of \tga.  Then, in
Sections~\ref{sec:algebra:unary} through~\ref{sec:algebra:ecreate}, we
present \tga operators.  Section~\ref{sec:analytics} presents an
extension of \tga to support Pregel-style analytics.

\subsection{Primitives and Soundness}
\label{sec:algebra:integrity}

\tga operators are translated into expressions in temporal relational
algebra (\tra).  Since TRA is applied to individual relations of \tve,
we must ensure that the combined state of these relations in the
result corresponds to a valid \tg, i.e., that the translation is
sound.  Recall from Definition~\ref{def:tg} that a valid \tg must
satisfy four requirements: {\bf R1}: Unique vertices and edges, {\bf
  R2}: Unique attribute values, {\bf R3}: Referential integrity, and
{\bf R4}: Coalesced.  We now describe four primitives that will ensure
soundness of \tga.

{\bf Coalesce.} To enforce requirements {\bf R1} and {\bf R4}, we
introduce the coalesce primitive $\coal{R}$, which merges adjacent or
overlapping time periods for value-equivalent tuples.  This operation
is similar to duplicate elimination in conventional databases, and has
been extensively studied in the
literature~\cite{DBLP:conf/vldb/BohlenSS96,DBLP:journals/sigmod/Zimanyi06}.
$\coal{R}$ is applied to individual relations of \tve, or to
intermediate results, following the application of operations that
uncoalesce.
The coalesce primitive can be implemented in relational
algebra~\cite{DBLP:conf/vldb/BohlenSS96}.  If a DBMS supports
automatic coalescing, this primitive is not necessary.

{\bf Resolve.} To enforce {\bf R2} we introduce the resolve primitive
$\resolve{f_1(k_1), \ldots, f_n(k_n)}{R}$, which is invoked by
operations that produce attribute relations with duplicates.  Resolve
computes a temporal group-by of the attribute relation $R$ by key
(e.g., by $v$ if $R$ represents vertex attributes).  It then computes
a bag-union of the properties occurring in each group, groups together
key-value pairs that correspond to the same property name $k_i$, and
aggregates values within each group using the specified aggregation
function $f_i$.  If no aggregation function is specified for a
particular property name, \insql{set} is used as the default.  For
example, if $R$ contains tuples $(v_1, [2/15,4/15),
  \{name=Ann,sal=100\})$ and $(v_1, [2/15, 3/15),
    \{name=Ann,sal=200\})$, the result of $\resolve{AVG(sal)}{R}$ will
    contain $(v_1, [2/15,3/15), \{name=Ann,sal=150\})$ and $(v_1,
      [3/15,4/15), \{name=Ann,sal=100\})$.  The resolve primitive can
        be implemented with temporal relational aggregation $\gamma^T$
        over unnested relations.

{\bf Constrain.} To enforce {\bf R3} we introduce the constrain
primitive $\constr{r}{s}$, which enforces referential integrity on
relation $\mathbf{r}$ with respect to relation $\mathbf{s}$.  For
example, this primitive is used to remove edges from \te for which one
or both vertices are absent from \tv, or restrict the validity period
of an edge to be within the validity periods of its vertices.

{\bf Split.} The final primitive $\wsplit{s}{w}{R}$ uncoalesces
relation $R$ in a particular way.  For each tuple $t \in R$ with time
period $p$, it emits a set of tuples, with the same values for
non-temporal attributes as in $t$, but with time periods split into
windows of width $w$ with respect to start time $s$.  For example,
$\wsplit{2/15}{3~\textsf{months}}{\te}$ for \insql{T1} in
Figure~\ref{fig:tg_ve} produces an uncoalesced relation with 2 tuples
for $(v_1, v_2)$, with periods $[2/15, 5/15)$ and $[5/15, 6/15)$, and
    2 tuples for $(v_2, v_3)$, with validity periods $[7/15, 8/15)$
      and $[8/15, 10/15)$.  This primitive will be necessary to
        express the temporal variant of node creation
        (Section~\ref{sec:algebra:ncreate}).  It will be used
        at an intermediate step in the computation, all final results
        will be coalesced as needed, enforcing {\bf R4}.

\subsection{Unary operators}
\label{sec:algebra:unary}

\begin{figure*}[t]
\begin{minipage}[b]{2.5in}
\includegraphics[width=2.5in]{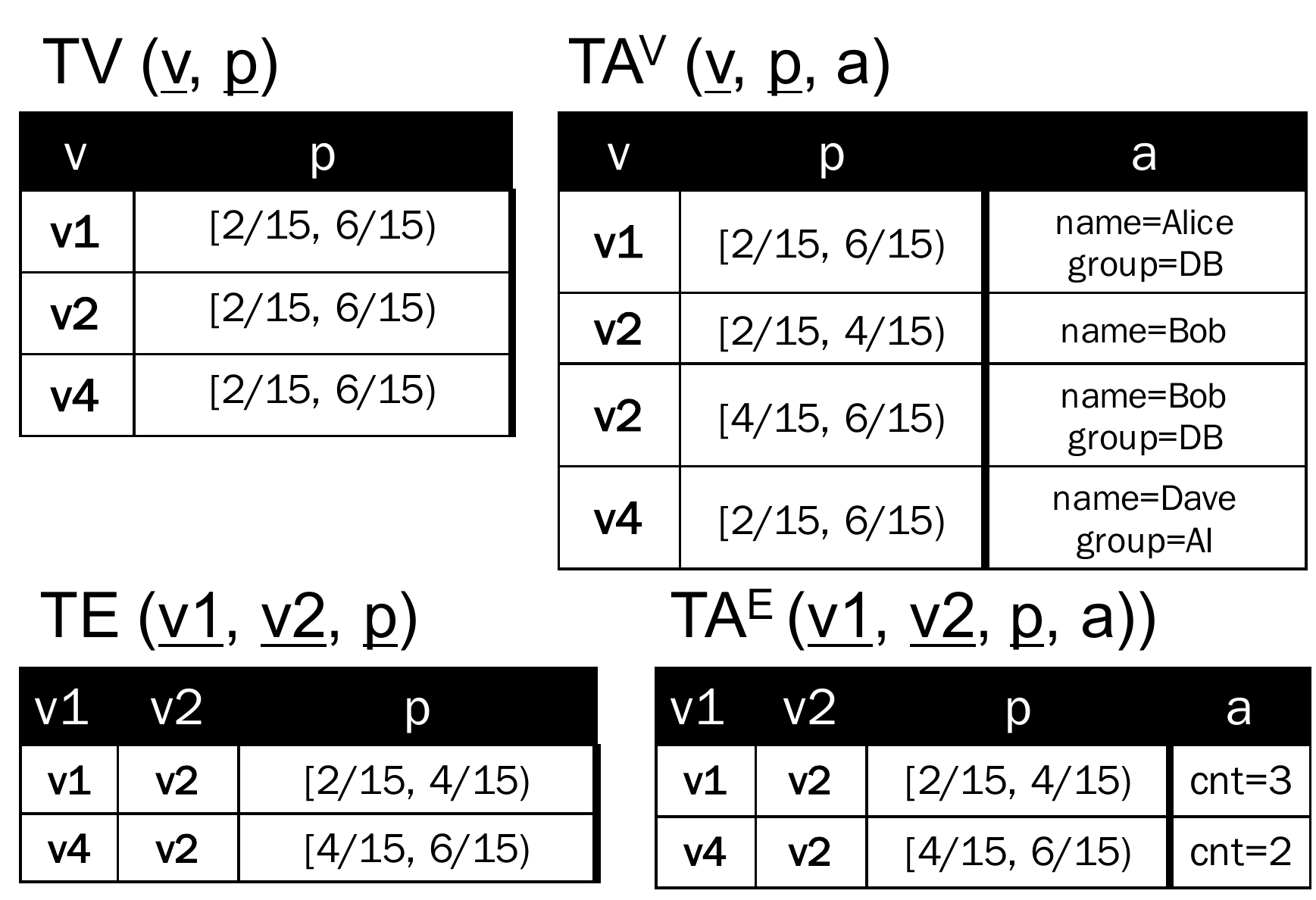}
\caption{T2.}
\label{fig:tg_t2}
\vspace{-0.1cm}
\end{minipage}
\begin{minipage}[b]{4.3in}
\includegraphics[width=4.3in]{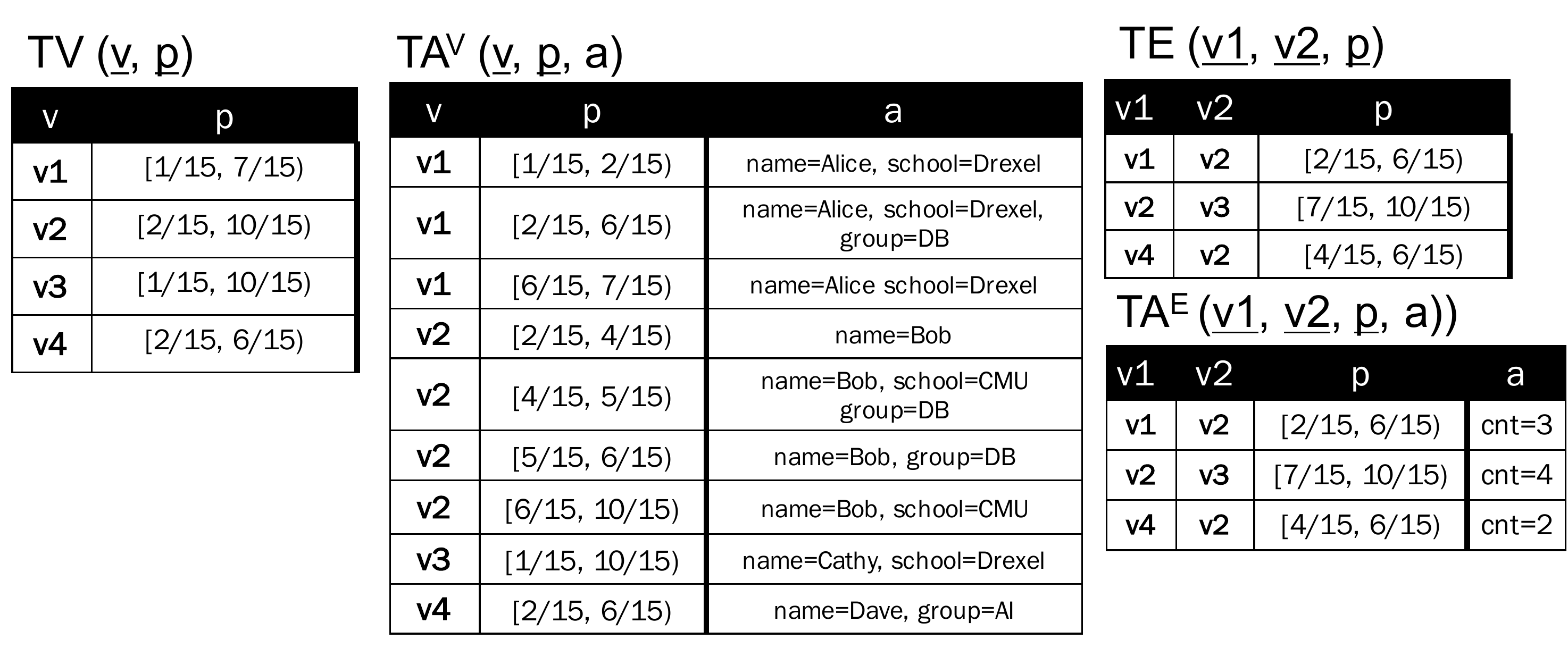}
\caption{$T1 \cup^T T2.$}
\label{fig:tg_union}
\vspace{-0.1cm}
\end{minipage}
\end{figure*}

{\bf Slice.} The slice operator, denoted $\slice{\bc}{\ttt}$, where
$\bc$ is a time interval, cuts a temporal slice from \ttt. The
resulting \tg will contain vertices and edges whose period $\bp$ has a
non-empty intersection with $\bc$.  We translate this \tga operator to
TRA statements over each constituent relation of \tve:
$\slice{\bc}{\tv}$ and similarly for \te, \tav and \tae.

{\bf Subgraph.} Temporal subgraph matching is a generalization of
subgraph matching for non-temporal graphs~\cite{Wood2012}.  This query
comes in two variants.

Temporal vertex-subgraph \subv{q^t_v}{\ttt} computes an induced
subgraph of \tve $\tve'(\tv', \te', \tav', \tae')$, with vertices
defined by the temporal conjunctive query (TCQ) $q^t_v$.  Note that
this is a subgraph query, and so $\tv' \subseteq^T \tv$.

Temporal edge-subgraph \sube{q^t_e}{\ttt} computes a subgraph of \ttt
$\tve'(\tv', \te', \tav', \tae')$ in which edges are defined by TCQ
$q^t_e$.  Since this is a subgraph query, $\te' \subseteq^T \te$.

Queries $q^t_v$ and $q^t_e$ may use any of the constituent relations
of \tve, and may explicitly reference temporal information, and so
require all input relations to be
coalesced~\cite{DBLP:reference/db/Bohlen09}.

Following the computation of $\tv' = q^t_v(\tv)$, \subv{q^t_v}{\ttt}
must invoke $\coal{\tv'}$ to enforce {\bf R1} and {\bf R4}; and
$\constr{\te'}{\tv'}$, $\constr{\tav'}{\tv'}$, $\constr{\tae'}{\te'}$
to enforce {\bf R3}.
Following the computation $\te' = q^t_e(\te)$, \sube{q^t_e}{\ttt} must
invoke $\coal{\te'}$ to enforce {\bf R1} and {\bf R4}; and
$\constr{\tae'}{\te'}$ to enforce {\bf R3}.  

{\bf Map.}  Temporal vertex-map and edge-map apply user-defined map
functions $f_v$ and $f_e$ to vertex or edge attributes.  Temporal
vertex-map $\vmap{f_v, \tve}$ outputs $\tve'$ in which $\tv'=\tv$,
$\te'=\te$, $\tae' = \tae$, and $\tav' =
\pi^T_{v,f_v(a)}\tav$. Temporal edge-map $\emap{f_e, \tve}$ is defined
analogously.

While $f_v$ and $f_e$ are arbitrary user-specified functions, there
are some common cases.  Map may specify the set of properties to
project out or retain, it may aggregate (e.g., \insql{COUNT}) or
deduplicate values of a collection property, or flatten a nested
value.
To produce a valid \tg, $\vmap{f_v, \tve}$ must invoke $\coal{\tav'}$
and $\emap{f_e, \tve}$ must invoke $\coal{\tae'}$.

\subsection{Aggregation}
\label{sec:algebra:agg}

Aggregation is a common graph operation that computes the value of a
vertex property $pname$ based on information available at the vertex
itself, at the edges associated with the vertex, and at its immediate
neighbors.  Aggregation can be used to compute simple properties such
as in-degree of a vertex, or more complex ones such as the set of
countries in which the friends of $v$ live.

It is convenient to think of aggregation as operating over a temporal
view $L(v_1,v_2,v_1.a,v_2.a,e.a,p)$, where $v_1$ refers to the vertex
for which the new property is being computed, $v_2$ refers to the
vertex from which information is gathered, $v_1.a$, $v_2.a$ and $e.a$
are attributes of the vertices and of the edge, and $p$ is the
associated time period.  $L$ is computed with a temporal join of \te
with two copies of \tv, one for each side of the edge, and with $\tav$
and $\tae$ outer-joined with the corresponding relations.  Outer-joins
are needed because a vertex / edge is not required to specify an
attribute.

When \tve represents a directed graph, direction of the edge can be
accounted for in the way the join is set up (e.g., mapping $v_2$ in
\te to $v_1$ in $L$ if the goal is to aggregate information on
incoming edges).  When \tve represents an undirected graph (recall
that we choose a canonical representation of an edge, with $v_1 \leq
v_2$), or when direction of the edge is unimportant, $L$ can be
computed from $\te(v_1,v_2,p) \cup^T \te(v_2,v_1,p)$ rather than from
$\te$.

Aggregation is denoted $\insql{agg}^T(dir,cond,f_m,f_a,pname,\ttt)$,
where $dir$ is the direction of the edge (one of 'right', 'left' or
'both') that determines how $L$ is computed, $cond$ is a predicate
over $L$, $f_m$ is a map function that emits a value for each tuple in
the result of $\sigma^T_{cond}(L)$ (e.g., 1 for computing degree of
$v_1$, or $v_2.a.country$ for computing the set of countries in which
the friends of $v_1$ live).  Finally, $f_a$ is the function that
aggregates values computed by $f_m$, and $pname$ is the name of the
property to which the computed value is assigned.  Putting everything
together, and omitting the computation of $L$ for clarity: we compute
a temporal relation $R = \coal{v_1 \gamma^T_{f_a} (\pi^T_{v_1,f_m}
  (\sigma^T_{cond} L))}$. (Here, $\gamma^T$ is the temporal version of
relational aggregation, and $v_1$ is the grouping attribute.)  We then
compute an outer join of \tav with $R$, and invoke the resolve
primitive to reconcile the newly-computed property stored in $R.a$
with $\tav.a$: $\tav' = \resolve{\insql{set}(pname)}{\tav
  \rightouterjoin^T_{v=v_1} R}$. Note the use of the resolve primitive
at the last step.  Although there are no duplicates in the result of
the outer join of \tav and $R$, since $R$ is temporally coalesced and
the join is by key, resolve is needed to compute a bag-union of
properties in $R.a$ and $\tav.a$, and to aggregate the values
corresponding to $pname$ (in case $pname$ already occurred as a
property in $\tav.a$).

We support various aggregation functions $f_a$, including the standard
\{ \insql{count} | \insql{min} | \insql{max} | \insql{sum} \}, which
have their customary meaning.  We also support \{ \insql{any} |
\insql{first} | \insql{last} | \insql{set} | \insql{list} \}, which
are possible to compute because properties being reduced have temporal
information.  \insql{first} and \insql{last} refer to the value of a
property with the earliest/latest timestamp, while \insql{set} and
\insql{list} associate a key with a collection of values.

\subsection{Binary set operators}
\label{sec:algebra:binary}

We support temporal versions of the three binary set operators
intersection ($\cap^T$), union ($\cup^T$), and difference
($\setminus^T$).

To compute $\insql{T1} \cup^T \insql{T2}$, we set $\tv' = \tv_1 \cup^T
\tv_2$ and $\te' = \te_1 \cup^T \te_2$.  Next, we compute $\tav' =
\resolve{f_v}{\tav_1 \fullouterjoin^T_{v} \tav_2}$ and $\tae' =
\resolve{f_e}{\tae_1 \fullouterjoin^T_{v1,v2} \tae_2}$. 

Consider \insql{T1} in Figure~\ref{fig:tg_ve} and \insql{T2} in
Figure~\ref{fig:tg_t2}.  Figure~\ref{fig:tg_union} illustrates
\insql{T1} $\cup^T$ \insql{T2}.  According to the definition of
$\cup^T$, periods are split to coincide for any group, and thus the
attribute values for e.g., $v_1$ have three distinct tuples.

To compute $\insql{T1} \cap^T \insql{T2}$, we set $\tv' = \tv_1 \cap^T
\tv_2$ and $\te' = \te_1 \cap^T \te_2$.  Next, we compute $\tav' =
\constr{\resolve{f_v}{\tav_1 \fullouterjoin^T_{v} \tav_2}}{\tv'}$ and
$\tae' = \constr{\resolve{f_e}{\tae_1 \fullouterjoin^T_{v1,v2}
    \tae_2}}{\te'}$.

As an example, when applying \insql{T1} $\cap^T$ \insql{T2}, only the
vertices and edges present in both \tgs are produced, thus eliminating
$v_3$ and $v_4$.  Period $[2/15, 4/15)$ for $v_2$ is computed as a
  result of the join of $[2/15, 5/15)$ in \insql{T1} and [$2/15,
      4/15)$ in \insql{T2}.  See Figure~\ref{fig:tg_inter} in
      Appendix~\ref{sec:app:examples} for full result.

To compute $\insql{T1} \setminus^T \insql{T2}$, we set $\tv' = \tv_1
\setminus^T \tv_2$ and $\te' = \te_1 \setminus^T \te_2$.  Next, we
compute $\tav' = \constr{{\tav}_1}{\tv'}$ and $\tae' =
\constr{{\tae}_1}{\te'}$.

To continue the example above, the result of \insql{T1} $\setminus^T$
\insql{T2} includes vertex v1 before 2/15 and after 6/15, splitting
one v1 tuple in \tv of T1 into two temporally-disjoint tuples in the
result.  See Figure~\ref{fig:tg_diff} in
Appendix~\ref{sec:app:examples} for full result.

Note that both $\cap^T$ and $\cup^T$ require that resolve be invoked,
to reconcile the vertex/edge attributes associated with vertices/edges
in the temporal intersection of the inputs.

\subsection{Node creation}
\label{sec:algebra:ncreate}

\begin{figure}[b]
\includegraphics[width=3in]{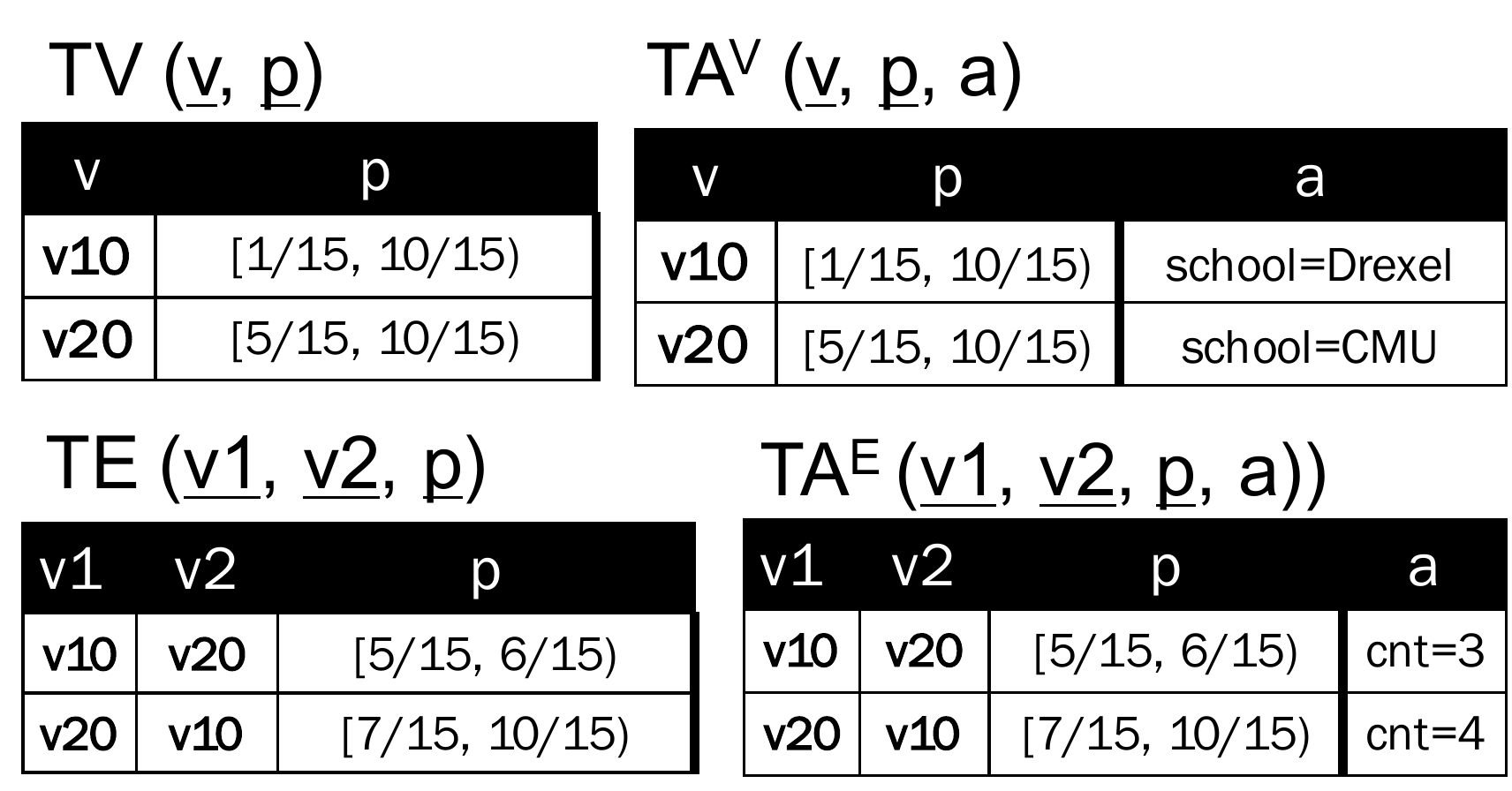}
\caption{\insql{node}$^T_a(school, set(school), max(cnt).)$}
\label{fig:tg_agg3}
\vspace{-0.1cm}
\end{figure}

The node creation operator enables the user to analyze an evolving
graph at different levels of granularity.  This operator comes in two
variants --- based on vertex attributes or based on temporal window.

{\bf Attribute-based node creation} is denoted\\
$\insql{node}^T_a(g_1,\ldots,g_i,f_{v1}(k_1),\ldots,f_{vn}(k_n),f_{e1}(l_1),\ldots,f_{em}(l_m),\tve)$,
where $g_1,\ldots,g_i$ are the grouping attributes, and each
$f_{vj}(k_j)$ ($f_{ej}(l_j)$) specifies an aggregation function
$f_{vj}$ (resp. $f_{ej}$) to be applied to a vertex property $k_j$
(resp. edge property $l_j$).  This operation allows the user to
generate a \tg in which vertices correspond to disjoint groups of
vertices in the input that agree on the values of all grouping
attributes.  For example, $\insql{node}^T_a(school,\tve)$ will compute
a vertex for each value of $\tav.a.school$.  Vertices that do not
specify a value for one or several grouping attributes at a given
time, will not contribute to the result for the corresponding
snapshot.

To compute $\tve' =
\insql{node}^T_a(g_1,\ldots,g_i,f_{v1}(k_1),\ldots,f_{vn}(k_n),$
\\ $f_{e1}(l_1),\ldots,f_{em}(l_m),\tve)$, we execute $L =
g_1,\ldots,g_i \gamma^T (\tv \rightouterjoin^T_v \tav)$, computing an
intermediate temporal relation for each group.  Next, we generate the
new vertex relation $\tv'$, by generating an id for each group with a
Skolem function: $\tv' = \sigma^T_{skolem(g_1,\ldots,g_i)}L$.

Then, $\tav' = \resolve{f_{v1}(k_1), \ldots,
  f_{vn}(k_n)}{\sigma^T_{skolem(g_1,\ldots,g_i), a} L}$.  Note the use
of the resolve primitive to reconcile attribute values within a group.

Vertices of the input are partitioned on their values of the grouping
attributes.  Partitioning of the vertices also induces a partitioning
of the edges. To compute the new edges $\te'$, we generate a temporal
conjunctive query that computes $E(v_1,v_2,p,v_1',v_2')$, where $v_1'$
and $v_2'$ are the identifiers of the vertices in \tv' to which $v_1$
and $v_2$ are mapped.  Finally, we compute $\te' =
\coal{\pi^T_{v_1',v_2'}E}$ and\\ $\tae' =
\resolve{f_{e1}(l_1),\ldots,f_{vm}(l_m)}{v_1',v_2' \gamma^T (E
  \bowtie^T_{v_1,v_2} \tae)}$.

Figure~\ref{fig:tg_agg3} illustrates attribute-based node creation
over \insql{T1} in our running example, with \insql{set(school)}
aggregation function for vertices and and \insql{max(cnt)} for edges.
Vertices $v_1$ and $v_3$ create a single new vertex $v_{10}$,
representing Drexel.

{\bf Window-based node creation} is denoted\\
$\insql{node}^T_w(w,q_v,q_e,f_{v1}(k_1),\ldots,f_{vn}(k_n),f_{e1}(l_1),\ldots,f_{em}(l_m),\tve)$,
where $w$ is the window specification, $q_v$ and $q_e$ are vertex and
edge quantifiers, and each $f_{vj}(k_j)$ ($f_{ej}(l_j)$) specifies an
aggregation function $f_{vj}$ (resp. $f_{ej}$) to be applied to a
vertex property $k_j$ (resp. edge property $l_j$).  This operation
corresponds to moving window temporal aggregation, and is inspired by
the stream aggregation work of~\cite{Li2005} and by generalized
quantifiers of~\cite{Hsu1995}, both adopted to graphs.

Window specification $w$ is of the form $n~\{unit|\insql{changes}\}$,
where $n$ is an integer, and $unit$ is a time unit, e.g., $10~min$,
$3~years$, or any multiple of the usual time units.  When $w$ is the
form $n~\insql{changes}$, it defines the window by the number of
changes that occurred in \tve (affecting any of its constituent
relations). Window boundaries are computed left-to-right, i.e., from
least to most recent.  

Vertex and edge quantifiers $q_v$ and $q_e$ are of the form \{
\insql{all} | \insql{most} | \insql{at least} $n$ | \insql{exists} \},
where $n$ is a decimal representing the percentage of the time during
which a vertex or an edge existed, relative to the duration of the
window (\insql{exists} is the default).  Quantifiers are useful for
observing different kinds of temporal evolution, e.g., to observe only
strong connections over a volatile evolving graph, we may want to only
include vertices that span the entire window ($q_v=\insql{all}$), and
edges that span a large portion of the window ($q_e=\insql{most}$).
 
For both kinds of window specification (by unit or by number of
changes), we must (1) compute a mapping from a tuple in a temporal
relation to one or multiple windows, and (2) aggregate over each
window.  The $s$ parameter for the split primitive is the smallest
start date across \tve.

To compute $\tv'$, we apply split to \tv, group by vid, select only
those vertices that meet the quantification, and finally coalesce:
$\tv' = \coal{\sigma^T_{q_V} (v \gamma^T_{\cup
    p}(\wsplit{s}{w}{\tv}))}$.  Similarly for $\te'$.  To compute
attribute relations we split, resolve with the aggregation functions,
and constrain: $\tav' =
\constr{\resolve{f_{v1}(k_1),\ldots,f_{vn}(k_n)}{\wsplit{s}{w}{\tav}}}{\tv'}$,
similarly for $\tae'$.

\begin{figure*}[t]
\begin{subfigure}[b]{0.5\textwidth}
\includegraphics[width=3.2in]{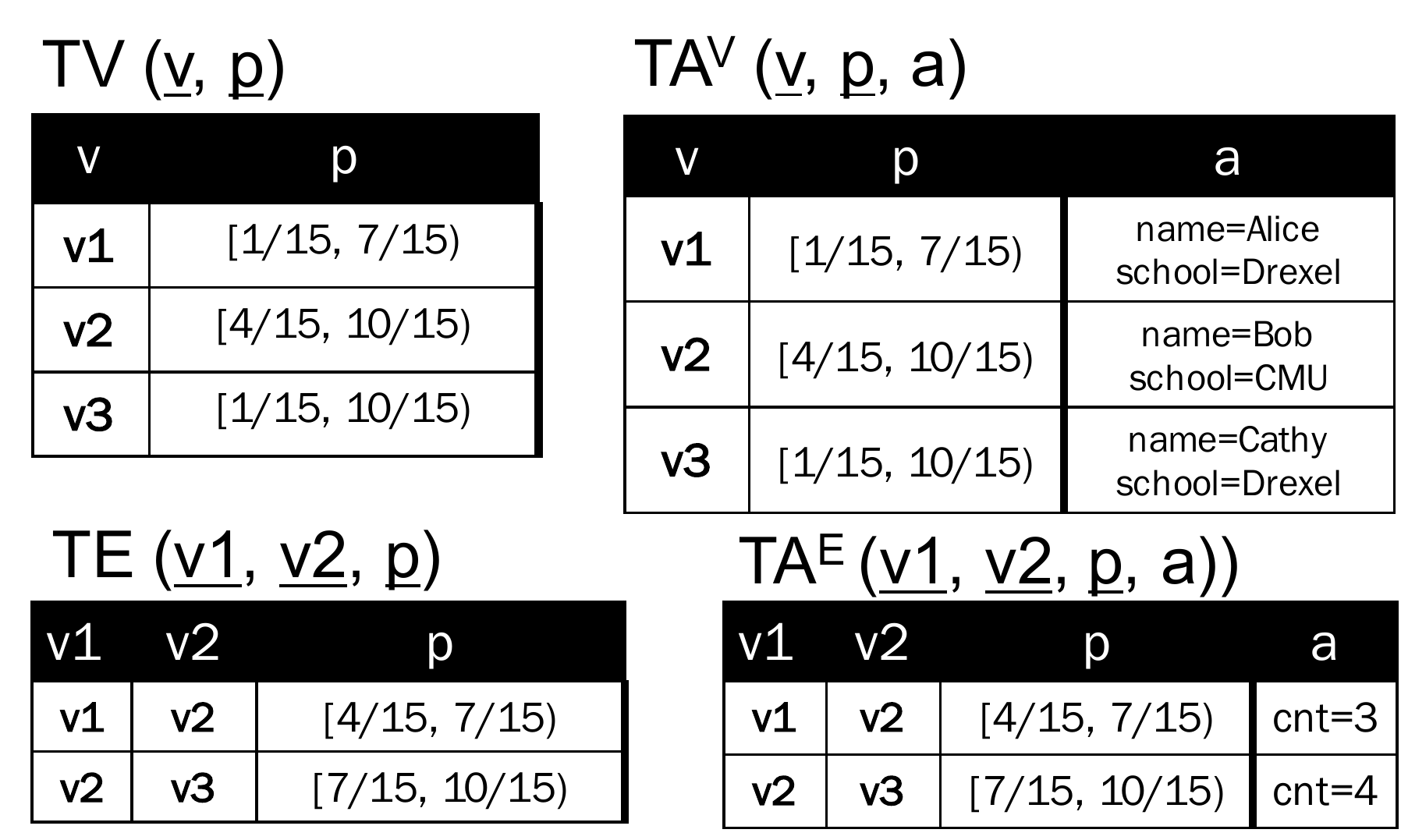}
\caption{By time: $w=3~\textsf{months}$.}
\label{fig:tg_agg1}
\end{subfigure}
\begin{subfigure}[b]{0.5\textwidth}
\includegraphics[width=3.2in]{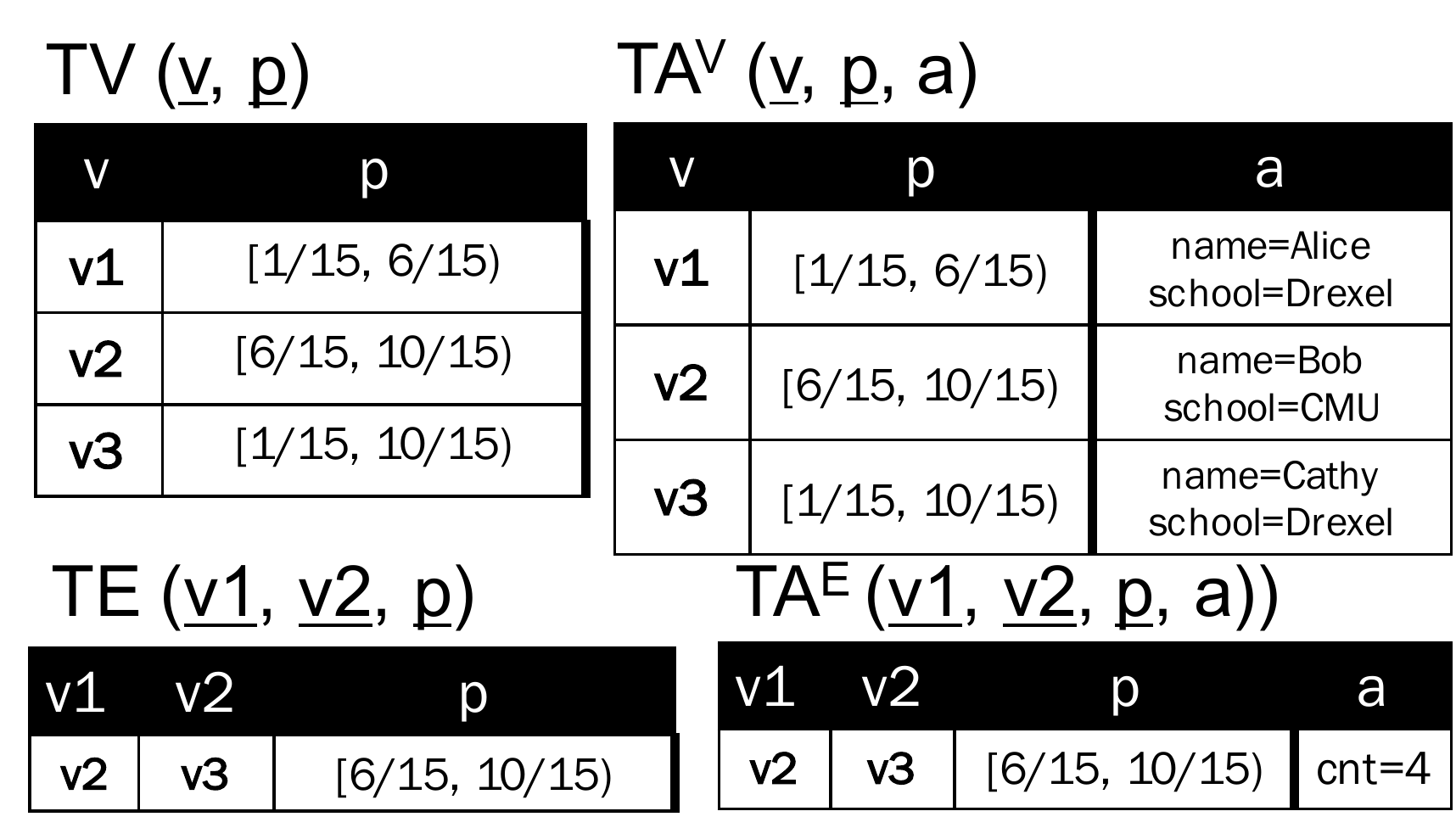}
\caption{By change: $w=3~\textsf{changes}$.}
\label{fig:tg_agg2}
\end{subfigure}
\caption[]{Node creation,
  $\insql{node}^T_w(\mathsf{q_v=always},\mathsf{q_e=exists,f_v=\{first(name),~first(school)\}},\ttt)$.}
\label{fig:tg_agg}
\vspace{-0.4cm}
\end{figure*}

Figure~\ref{fig:tg_agg1} illustrates window-based node creation by
time ($w=3~\textsf{months}$), and Figure~\ref{fig:tg_agg2} --- by
change ($w=3~\textsf{changes}$).  Both are applied to \insql{T1} in
our running example with \insql{all} quantifier for vertices and
\insql{exists} for edges, and \insql{first} aggregation function for
vertex and edge properties.  $v_2$ is present in the result in
Figure~\ref{fig:tg_agg1} starting at $4/15$ because it did not exist
for the entirety of the first window, while in
Figure~\ref{fig:tg_agg2} it is produced starting $6/15$.

\subsection{Edge creation}
\label{sec:algebra:ecreate}

Edge creation
$\insql{edge}^T(q,f_1(k_1),\ldots,f_n(k_n),\ttt_1,\ttt_2)$ is a binary
operator that computes a \tg on the vertices $\tv = \tv_1 \cup^T
\tv_2$, with edges and edge attributes computed by a conjunctive query
over the constituent relations of $\ttt_1$ and $\ttt_2$.  
$L(v_1,v_2,a,p) = \resolve{f_1(k_1),\ldots,f_n(k_n)}{q(\ttt_1,
  \ttt_2)}$, and then set $\te' = \coal{\pi^T_{v1,v2} L}$, and $\tae'
= \sigma_{a~is~not~null} L$.  We then compute $\tav =
\resolve{f_1(k_1),\ldots,f_n(k_n)}{\tv_1 \cup^T \tv_2}$.

Edge creation has several important applications.  It can be used to
compute friend-of-friend edges (passing in the same \tg as both
arguments).  Since $q$ can include predicates over the timestamps,
$\insql{edge}^T$ can compute journeys.  A journey is a path in the
evolving graph with non-decreasing time
edges~\cite{Casteigts2011,Ferreira2004}.  By adding a temporal
condition to $q$, we can obtain journeys similar to time-concurrent
paths.

In graph theory, a graph join of two undirected unlabeled disjoint
graphs is defined as the union of the two graphs and additional edges
connecting every vertex in graph one with each vertex in graph two.
We can obtain a graph join by computing $\te' = \tv_1 \times^T \tv_2$.

SocialScope~\cite{Amer-Yahia2009} defines (non-temporal) graph
composition: compose the edge of the two operands and return an
edge-induced subgraph.  TGA can express this operator by a combination
of edge creation and vertex subgraph.

\subsection{Extension: user-defined analytics}
\label{sec:analytics}

For many types of analysis, it is necessary to compute some property,
such as PageRank of each vertex $v$, or the length of the shortest
path from a given designated vertex $u$ to each $v$, for each time
point.  This information can then be used to study how the graph
evolves over time.  \ql supports this type of analysis through {\em
  temporal user-defined analytics}, which conceptually execute an
aggregation-map operation sequence repeatedly over a set number of
iterations or until fix-point and save the result in property $pname$.
$\ttt' = \insql{pregel}^T (dir,cond,f_m,f_a,pname,iter,\ttt)$, where
all arguments are like in aggregation, with an additional $iter$
argument specifying the number of iterations to perform.

\section{Expressive power}
\label{sec:formal}

In this section we study expressiveness of the \tg model, which
consists of the \tg data structure (Definition~\ref{def:tg}) and of
\tga, an algebra for querying the data structure
(Section~\ref{sec:algebra}). We stress that ours is a valid-time data
model that does not provide transaction-time and bi-temporal support.

{\bf Important note:} We restrict our attention to a subset of \tga
operations, excluding window-based node creation
(Section~\ref{sec:algebra:ncreate}) from our analysis.  Window-based
node creation requires the split $\wsplit{s}{w}{R}$ primitive, which
cannot be naturally expressed in \tra.  We defer an investigation of
expressiveness of \tga with window-based node creation to future work.

We start by proposing two natural notions of completeness for a
temporal graph query language.

\begin{definition}
  Let $L^t$ be a temporal relational language and $\tve$ --- a
  relational representation of a temporal graph.  An \edgeq $q^t_e$ in
  $L$ takes a graph $\tve (\tv, \te, \tav, \tae)$ as input, and
  outputs another graph $\tve'$ on the vertices of $\tve$ such that
  the edges of $\tve'$ are defined by $q^t_e$.  A language is
  $L^t$-\edgec if it can express each $q^t_e$ in $L^t$.
  \label{def:edgecomplete}
\end{definition}

Note that the query $q^t_e$ is not restricted to act on \te alone, and
may refer to the other constituent relations \tve.

\begin{definition}
  Let $L^t$ be a temporal relational language, and let $\tve$ be a
  relational representation of a temporal graph.  A \vertexq $q^t_v$
  in $L^t$ takes a graph $\tve (\tv, \te, \tav, \tae)$ as input, and
  outputs another graph $\tve'$ such that the vertices of $\tve'$ are
  defined by $q^t_v$. A language is $L^t$-\vertexc if it can express
  each $q^t_v$ in $L^t$.
\label{def:vertexcomplete}
\end{definition}

We now refer to definitions~\ref{def:edgecomplete}
and~\ref{def:vertexcomplete} and show that \tga is \edgec and
\vertexc, with respect to the valid-time fragment of temporal
relational algebra (\tra).  \tra is an algebra that corresponds to
temporal relational calculus~\cite{DBLP:reference/db/ChomickiT09b}, a
first-order logic that extends relational calculus, supporting
variables and quantifiers over both the data domain and time domain.

\begin{theorem}
\tga is TRA-\edgec.
\label{th:edgecomplete}
\end{theorem}

\begin{proof}
  The result of every conjunctive edge-query over the vertices of \ttt
  can be expressed by $\sigma_{c} (\tv \times^T \tv)$.  Queries of
  this kind can be expressed by the edge creation operator of \tga
  (Section~\ref{sec:algebra:ecreate}), invoked as:\\
  $\insql{edge}^T(q=\sigma_{c} (\ttt_1.\tv \times^T \ttt_2.\tv),\ttt_1
  = \ttt,\ttt_2=\ttt)$\end{proof}

\begin{theorem}
\tga is TRA-\vertexc.
\label{th:vertexcomplete}
\end{theorem}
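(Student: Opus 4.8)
The plan is to mirror the proof of Theorem~\ref{th:edgecomplete}. There, every \tra \edgeq collapses to a selection over the self cross-product, $\sigma_c(\tv \times^T \tv)$, and edge creation realizes this form directly. For vertices the analogous canonical form is a selection over the vertex relation itself, $\sigma_c(\tv)$, and the natural operator that realizes it is temporal vertex-subgraph $\subv{q^t_v}{\ttt}$ (Section~\ref{sec:algebra:unary}), which returns an induced subgraph whose vertex set is precisely the tuples selected by its query argument.

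First I would observe that a \vertexq (Definition~\ref{def:vertexcomplete}) must output a relation with the vertex schema $(\underline{v},\underline{\bp})$, and that, because \tra corresponds to a domain-independent temporal calculus, every vertex identifier it returns is drawn from the active domain of the input. When those identifiers are existing vertices, I would argue that $q^t_v$ is equivalent to $\sigma_c(\tv)$ for some \tra condition $c$ that is free to reference \te, \tav and \tae through quantification. The crucial point, exactly as in the edge case, is that although vertex-subgraph is phrased in terms of a conjunctive query, its predicate may itself be an arbitrary \tra condition, so the full first-order power of $q^t_v$ (including negation, union, and quantifiers) is absorbed into $c$. Feeding this $c$ to $\subv{q^t_v}{\ttt}$ yields a graph whose vertices are exactly those defined by $q^t_v$; the edges and attributes are then induced, which is immaterial since the \vertexq constrains only the vertex set. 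The operator's own calls to $\coal{\tv'}$ and to $\constr{\cdot}{\cdot}$ restore {\bf R1}, {\bf R3} and {\bf R4}, so the output is a valid \tg.

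Second, I would dispatch the remaining case in which $q^t_v$ returns identifiers that are not already vertices of \ttt, for instance values drawn from $\tav.a$, as when one wants one vertex per distinct attribute value. Here I would first apply attribute-based node creation $\insql{node}^T_a$ (Section~\ref{sec:algebra:ncreate}), which lies inside the fragment we analyze, to materialize a candidate vertex for each relevant group of the active domain, and then compose with vertex-subgraph to cut the candidate set down to exactly the vertices named by $q^t_v$.

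The step I expect to be the main obstacle is precisely this asymmetry with the edge case. An \edgeq is guaranteed to range over pairs of existing vertices, so $\tv \times^T \tv$ is a closed universe that the cross-product exhausts; there is no corresponding closed universe for vertices once $q^t_v$ is allowed to coin fresh identifiers. The delicate part of the argument is therefore to show that node creation always materializes a candidate set guaranteed to contain every vertex $q^t_v$ can emit, that the subsequent selection returns exactly $\tv'$ with the correct periods after coalescing, and that the composition of node creation, vertex-subgraph, and the soundness primitives leaves a sound \tg satisfying {\bf R1}--{\bf R4}. A secondary subtlety to settle is whether ``expressing'' $q^t_v$ requires the emitted identifiers to match verbatim or only up to relabeling, since node creation assigns Skolem identifiers rather than reusing the underlying attribute values; I would resolve this by arguing that equality holds up to a bijective renaming of vertex identifiers, which suffices for the graph-theoretic notion of completeness in Definition~\ref{def:vertexcomplete}.
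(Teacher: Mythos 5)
Your proposal follows the same route as the paper's own proof, which argues that every \tra \vertexq is expressible by a composition of vertex-subgraph and attribute-based node creation, with node creation's Skolem functions handling exactly the case you isolate --- queries that introduce fresh vertex identifiers. The paper's argument is only a two-sentence sketch, so your elaboration (domain-independence of the candidate set, soundness via the coalesce/constrain primitives, and identity of vertices up to Skolem renaming) fills in the details the paper leaves implicit rather than departing from its approach.
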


\begin{proof}
  Every \tra vertex-query can be expressed in \tga
  by a sequence of vertex-subgraph $q^T_v(\tve)$
  (Section~\ref{sec:algebra:unary}) and attribute-based node creation
  $\insql{node}^T_a$ (Section~\ref{sec:algebra:ncreate}).
  Attribute-based node creation supports Skolem functions, and is
  necessary to handle queries that introduce vertex identifiers.  
\end{proof}

For a point-based model, it is customary to interrogate two
properties --- snapshot reducibility (S-reducibility) and extended
snapshot reducibility (extended S-reducibility).
S-reducibility states that for every query $q$ in $L$, there must
exist a syntactically similar query $q^t$ in $L^t$ that generalizes
$q$.  Specifically the following relationship should hold when $q^t$
is evaluated over a temporal database $D^t$ (recall that $\tau$ is the
temporal slice operator): $q(\tau_c(D^t)) = \tau_c(q^t(D^t))$, for all
time points $c$.  Extended S-reducibility requires that $L^t$ provide
an ability to make explicit references to timestamps alongside
non-temporal predicates.

\tga is s-reducible and extended s-reducible because, as we showed in
Section~\ref{sec:algebra}, every \tra operation can be rewritten into
\tra, which is s-reducible and extended s-reducible w.r.t. relational
algebra.

\section{System}
\label{sec:sys}

We developed a prototype system \ql which supports \tga operations on
top of Apache Spark/GraphX~\cite{DBLP:conf/osdi/GonzalezXDCFS14}.  The
data is distributed in partitions across the cluster workers, read in
from HDFS, and can be viewed both as a graph and as a pair of RDDs.
All \tg operations are available through the public API of the \ql
library, and may be used in an Apache Spark application.

\subsection{Reducing temporal operators}

Apache Spark is not a temporal DBMS but rather an open-source
in-memory distributed framework that combines graph parallel and data
parallel abstractions.  Following the approach of Dignos et
al.~\cite{Dignos2012} we reduce our temporal operators into a sequence
of nontemporal relational operators or their equivalents for Spark
RDDs, maintaining point semantics.  This allows our algebra to be
implemented in any nontemporal relational database.  In total, we need
the four temporal primitives we introduced in
Section~\ref{sec:algebra:integrity} (coalesce, resolve, constrain, and
split), as well as the primitives described in~\cite{Dignos2012}:
extend and normalize.  Because our model uses point semantics and does
not require change preservation, we do not need the align primitive
of~\cite{Dignos2012} and can use the normalize primitive in its place.

The {\em coalesce} primitive merges adjacent and overlapping time
periods for value-equivalent tuples.  This operation, which is similar
to duplicate elimination in conventional databases, has been
extensively studied in the
literature~\cite{DBLP:conf/vldb/BohlenSS96,DBLP:journals/sigmod/Zimanyi06}.
Several implementations are possible for the coalesce operation over
temporal SQL relations.  Because Spark is an in-memory processing
system, we use the partitioning method, where the relation is grouped
by key, and tuples are sorted and folded within each group to produce
time periods of maximum length.  Eager coalescing, however, is not
desirable since it is expensive and some operations may produce
correct results (up to coalescing) {\em even when computing over
  uncoalesced inputs}.  Any operation that is time-variant
requires input to be coalesced.  We base the eager coalescing
rules on coalescing rules in \tra~\cite{DBLP:conf/vldb/BohlenSS96}.

The {\em resolve} primitive is implemented using a group by key
operation in Spark and convenience methods on the property set class.
The property set class supports adding all properties from another set
such that they are combined by key, and applying aggregation functions
one at a time for each property name.

The {\em constrain} primitive constrains one relation with respect to
another, such as removing edges from the result that do not have
associated nodes, or trimming the edge validity period to be within
the validity periods of associated nodes.  It is introduced here
because Spark does not have a built-in way to express foreign key
constraints.  We do this by executing a join of the two relations ---
either a broadcast join or a hash join --- and then adjusting time
periods as necessary.  This is an expensive operation and is only
performed when necessary as determined by the soundness analysis,
e.g., when vertex-subgraph has a non-trivial predicate over \tve, and
when node creation has a more restrictive vertex quantifier $q_v$ than
edge quantifier $q_e$.

The {\em split} primitive maps each tuple in relation $R$ into one or
more tuples based on a temporal window expression such as
\insql{w=3~months}.  A purely relational implementation of this
primitive is possible with the use of a special Chron relation that
stores all possible time points of the temporal universe and supports
computation without materialization.  Another approach is to introduce
fold and unfold functions that can split each interval into all its
constituent time points.  Both of these approaches have strong
efficiency concerns, see~\cite{DBLP:conf/time/BohlenGJ06} for an
in-depth discussion.  In Spark we are not limited to relational
operators only and can use functional programming constructs.  Split
can be efficiently implemented with a \insql{flatMap}, which emits
multiple tuples as necessary by applying a lambda function and
flattening the result.  We use this method in our implementation.

The {\em extend} primitive extends a relation with an additional
attribute that represents the tuple's timestamp, see~\cite{Dignos2012}
for a definition.  Extend allows explicit references to timestamps in
operations, and is needed for extended snapshot reducibility.  We
implement extend by defining an Interval class and including it as a
field in every RDD.

The {\em normalize} primitive produces a set of tuples for each tuple
in {\bf r} by splitting its timestamp into non-overlapping periods
with respect to another relation {\bf s} and attributes {\bf B}.
See~\cite{Dignos2012} for the formal definition.  Intuitively,
normalize creates tuples in corresponding groups such that their
timestamps are also equivalent.  This primitive is necessary for node
creation, set operators like union, and joins.  Normalize primitive
relies on an efficient implementation of the tuple splitter.  We split
each tuple based on the change periods over the whole graph, avoiding
costly joins but potentially splitting some tuples unnecessarily.

\subsection{Physical Representations}
\label{sec:sys:datastructs}

It is convenient to use intervals to compactly represent consecutive
value-equivalent snapshots of \tve --- timeslices in which no change
occurred in graph topology, or in vertex and edge attributes.  We use
the term {\em representative graph} to refer to such snapshots, since
they represent an interval. 

We considered four in-memory \tg representations that differ both in
compactness and in the kind of locality they prioritize. With {\em
  structural locality}, neighboring vertices (resp. edges) of the same
representative graph are laid out together, while with {\em temporal
  locality}, consecutive states of the same vertex (resp. edge) are
laid out together~\cite{Miao2015}.  We now
describe each representation.

We can convert from one representation to any other at a small cost
(as supported by our experimental results), so it is useful to think
of them as access methods in the context of individual operations.

{\bf VertexEdge (VE)} is a direct implementation of the \tve model,
and is the most compact: one RDD contains all vertices and another all
edges.  Consistently with the GraphX API, all vertex properties are
stored together as a single nested attribute, as are all edge
properties.  We currently do not store the \tv and \te relations
separately but rather together with \tav and \tae, respectively.
While VE does not necessitate a particular order of tuples on disk, we
opt for a physical layout in which all tuples corresponding to the
same vertex (resp. edge) are laid out consecutively, and so VE
preserves temporal locality.

VE supports all \tga operations except analytics, because an analytic
is defined on a representative graph, which VE does not materialize.
As we will show in Section~\ref{sec:exp}, this physical representation
is the most efficient for many operations.  In the current prototype
we limit the expressiveness of some of the operations, such as only
supporting vertex- and edge-subgraph queries over the \tav and \tae
relations.

{\bf RepresentativeGraphs (RG)} is a collection (parallel sequence) of
GraphX graphs, one for each representative graph of \ttt, where
vertices and edges store the attribute values for the specific time
interval, thus using structural locality.  This representation
supports all operations of \tga which can be expressed over snapshots,
i.e. any operation which does not explicitly refer to time.  GraphX
provides Pregel API which is used to support all the analytics.
While the \rg representation is simple, it is not compact, considering
that in many real-world evolving graphs there is a 80\% or larger
similarity between consecutive snapshots~\cite{Miao2015}.  In a
distributed architecture, however, this data structure provides some
benefits as operations on it can be easily parallelized by assigning
different representative graphs to different workers.  We include this
representation mainly as a naive implementation to compare performance
against.

\rg is the most immediate way to implement evolving graphs using
GraphX. Without \ql a user wishing to analyze evolving graphs might
implement and use the \rg approach.  However, as we will show in
Section~\ref{sec:exp}, this would lead to poor performance for most
operations.

\begin{figure}[t!]
\centering
\includegraphics[width=3in]{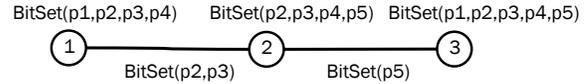}
\caption{\og representation of \insql{T1}.}
\label{fig:ogc}
\end{figure}

{\bf OneGraph (OG)} is the most topologically compact representation,
which stores all vertices from \tav {\em and} edges from \tae once, in
a single aggregated data structure.  OG emphasizes temporal locality,
while also preserving structural locality, but leads to a much denser
graph than RG.  This, in turn, makes parallelizing computation
challenging.

An OG is implemented as a single GraphX graph where the vertex and
edge attributes are bitsets that encode the presence of a vertex or
edge in each time period associated with some representative graph of
a \tg.  To construct an \og from \tve, vertices and edges of \tv and
\te relations each are grouped by key and mapped to bits corresponding
to periods of change over the graph.  Because \og stores information
only about graph topology, far fewer periods must be represented and
computed for \og than for \rg.  The actual reduction depends on the
rate and nature of graph evolution.  Information about time validity
is stored together with each vertex and edge.  Figure~\ref{fig:ogc}
shows the OG for \insql{T1} from Figure~\ref{fig:tg_ve}.

Analytics are supported using a batching method over the Pregel API.
Similar to ImmortalGraph~\cite{Miao2015}, the analytics are computed
over all the representative graphs together.  Vertices exchange
messages marked with the applicable intervals and a single message may
contain several interval values as necessary.

As we will see experimentally in Section~\ref{sec:exp}, \og is often
the best-performing data structure for node creation, and also has
competitive performance for analytics.  Because of this focus, \og
supports operations only on topology: analytics, node creation. and
set operators for graphs with no vertex or edge attributes.  All other
operations are supported through inheritance from an abstract parent,
and are carried out on the VE data structure.  Thus \og and \hg,
below, can be thought of as indexes on VE.

In our preliminary experiments we observed that \og exhibited worse
than expected performance, especially for large graphs with long
lifetimes.  The reason this is so is because good graph partitioning
becomes difficult as topology changes over time.  Communication cost
is the main contributor to analytics performance over distributed
graphs, so poor partitioning leads to increased communication costs.
When the whole graph can fit into memory of a single worker,
communication cost goes away and the batching method used by \og
becomes the most efficient, as has been previously shown
in~\cite{Miao2015}.  To provide better performance on analytics, we
introduce {\bf HybridGraph (HG)}.  \hg trades compactness of \og for
better structural locality of \rg, by aggregating together several
consecutive representative graphs, computing a single \og for each
graph group, and storing these as a parallel sequence.  In our current
implementation each \og in the sequence corresponds to the same number
of temporally adjacent graphs.
This is the simplest grouping method, and we observed that placing the
same number of graphs into each group often results in unbalanced
group sizes.  This is because evolving graphs commonly exhibit strong
temporal skew, with later graphs being significantly larger than earlier
ones.  We are currently working on more sophisticated grouping
approaches that would lead to better balance, and ultimately to better
performance.  However as we will see experimentally in
Section~\ref{sec:exp}, the current \hg implementation already improves
performance compared to \og, in some cases significantly.

Like \og, \hg focuses on topology-based analysis, and so does not
represent vertex and edge attributes. \hg implements analytics, node
creation, and set operators, and supports all other operations through
inheritance from VE.  Analytics are implemented similar to \og, with
batching within each graph group.

Since RG, OG and HG are implemented over GraphX graphs, the
referential integrity is maintained by the framework and the constrain
primitive is not required.  All primitives are used with the VE
representation.

\subsection{Additional Implementation Details}
\label{sec:sys:maint}

{\bf Partitioning.}  Graph partitioning has a tremendous impact on
performance.  A good partitioning strategy needs to be balanced,
assigning an approximately equal number of units to each partition,
and limit the number of cuts across partitions, reducing
cross-partition communication.  In previous experiments we compared
performance with no repartitioning after load vs.  with
repartitioning, using the GraphX E2D edge partitioning strategy.  In
E2D, a sparse edge adjacency matrix is partitioned in two dimensions,
guaranteeing a $2 \sqrt{n}$ bound on vertex replication, where $n$ is
the number of partitions. E2D has been shown to provide good
performance for Pregel-style
analytics~\cite{DBLP:conf/osdi/GonzalezXDCFS14,MoffittTempWeb16}.  The
user can repartition the representations at will, consistent with the
Spark approach.

{\bf Graph loading.}  We use the Apache Parquet format for on-disk
storage, with one archive for vertices and another for edges,
temporally coalesced.  This format corresponds to the VE physical
representation~\ref{sec:sys:datastructs}.  In cases where there is no
more than 1 attribute per vertex and edge, this representation is also
the most compact.  

For ease of use, we provide a GraphLoader utility that can initialize
any of the four physical representations from Apache Parquet files on
HDFS or on local disk.  A \ql user can also
implement custom graph loading methods to load vertices and edges, and
then use the \insql{fromRDDs} to initialize any of the four physical
representations.

{\bf Integration with SQL.}  The \ql API exposes vertex/edge RDDs to
the user and provides convenience methods to convert them to Spark
Datasets.  Arbitrary SparkSQL queries can then be executed over these
relations.

\section{Experimental Evaluation}
\label{sec:exp}

\begin{figure*}[t]
\vspace{-0.2in}
\centering
\begin{minipage}[b]{2.1in}
\centering
\includegraphics[width=2.1in]{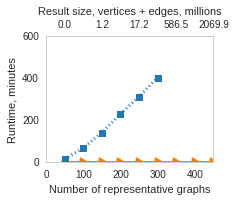}
\vspace{-0.2in}
\caption{Slice on nGrams.}
\label{fig:slicengrams}
\vspace{-0.1in}
\end{minipage}
\begin{minipage}[b]{2.1in}
\centering
\includegraphics[width=2.1in]{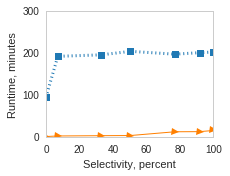}
\vspace{-0.2in}
\caption{Subgraph on nGrams.}
\label{fig:subgraphngrams}
\vspace{-0.1in}
\end{minipage}
\begin{minipage}[b]{2.1in}
\includegraphics[width=2.4in]{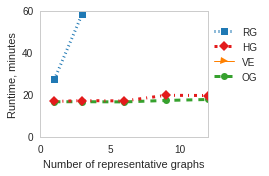}
\vspace{-0.2in}
\caption{Aggregate on Twitter.}
\label{fig:degtwitter}
\vspace{-0.1in}
\end{minipage}
\end{figure*}

\begin{figure*}[t!]
\centering
\begin{subfigure}{0.3\textwidth}
\includegraphics[width=2.2in]{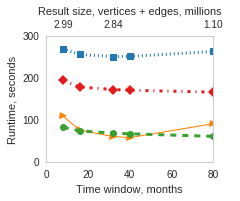}
\caption{$q_v=\insql{always}$, $q_e=\insql{always}$, wiki-talk}
\label{fig:agg1}
\end{subfigure}
\begin{subfigure}{0.3\textwidth}
\includegraphics[width=2.2in]{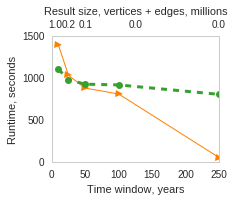}
\caption{$q_v=\insql{always}$, $q_e=\insql{exists}$, nGrams}
\label{fig:agg2}
\end{subfigure}
\begin{subfigure}{0.35\textwidth}
\includegraphics[width=2.5in]{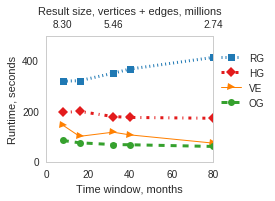}
\caption{$q_v=\insql{always}$, $q_e=\insql{exists}$, wiki-talk}
\label{fig:agg4}
\end{subfigure}
\caption[]{Node creation with temporal windows.}
\vspace{-0.1in}
\label{fig:agg}
\end{figure*}

\subsection{Setup}
\label{sec:exp:setup}

All experiments were conducted on a 16-slave in-house Open Stack
cloud, using Linux Ubuntu 14.04 and Spark v2.0.  Each node has 4 cores
and 16 GB of RAM.  Spark Standalone cluster manager and Hadoop 2.6
were used.
Because Spark is a lazy evaluation system, a materialize operation was
appended to the end of each query, which consisted of the count of
nodes and edges.  Each experiment was conducted 3 times with a cold
start -- the running time includes the setup time of submitting the
job to the cluster manager, uploading the jar to the cluster, reading
the data from disk, building the chosen representation, and running
a single query.  We report the average running time, which is
representative because we took great care to control variability:
standard deviation for each measure is at or below 5\% of the mean
except in cases of very small running times.  No computation results
were shared between subsequent runs.

{\bf Data.}  We evaluate performance of our framework on three real
open-source datasets, summarized in Table~\ref{tab:datasets}.
wiki-talk (\url{http://dx.doi.org/10.5281/zenodo.49561}) contains over
10 million messaging events among 3 million wiki-en users\eat{2002
  through 2015}, aggregated at 1-month resolution.\\nGrams
(\url{http://storage.googleapis.com/books/ngrams/books/datasetsv2.html})
contains word co-occurrence pairs\eat{ from 1520 through 2008}, with
30 million word nodes and over 2.5 billion undirected edges.  The
Twitter social graph~\cite{Gabielkov:2014:SSN:2591971.2591985}
contains over 23 billion directed follower relationships between 0.5
billion twitter users\eat{ collected in 2012}, sampled at 1-month
resolution.\eat{ based on account creation information from April
  2006.} \eat{DELIS contains monthly snapshots of a portion of the Web
  graph focusing on the .uk domains from 05/2006 through
  05/2007~\cite{BSVLTAG}. }The datasets differ in size, in the number
and type of attributes and in evolution rates, calculated as the
average graph edit similarity~\cite{Ren2011}. \eat{the evolutionary
  properties: co-authorship network nodes and edges have limited
  lifespan, while the nGrams network grows over time, with nodes and
  edges persisting for long duration.  All figures in the body of this
  section are on the larger nGrams dataset.  Refer to the Appendix for
  the DBLP figures, which show similar trends as nGrams.}

\subsection{Individual operators}
\label{sec:exp:ops}

{\bf Slice} performance was evaluated by varying the slice time window
and materializing the \tg, and is presented in
Figures~\ref{fig:slicengrams} for nGrams and~\ref{fig:slicewiki} for
wiki-talk (in Appendix~\ref{sec:app2}).  Similar trends were observed
for twitter.  Slice is expected to be more efficient when executed
over VE when data is coalesced on disk than over \sg, and we observe
this in our experiments.  This is because multiple passes over the
data are required for \sg to compute each representative graph,
leading to linear growth in running times for file formats and systems
without filter pushdown, as is the case here.  Slice over VE simply
executes temporal selection and has constant running times (29 sec for
wiki-talk, about 1.5 min for nGrams).\eat{Recall that in VE
  \insql{slice} performs temporal selection, and method when data on
  disk is coalesced.  \sg, in contrast, does multiple passes of select
  over the same data to compute each RG.  Thus, as expected, VE
  behavior is directly dependent on the size of input data regardless
  of the slice size for file formats and systems without filter
  pushdown, as is the case here, or if the data does not have temporal
  locality (Figure~\ref{fig:slicewiki}, Figure~\ref{fig:slicengrams}
  -- about 1.5 minutes for VE).  \sg behavior is linear in the slice
  size. } This experiment essentially measures the cost of
materializing \sg from its on-disk representation. \eat{ We observed the
same linear trend in our preliminary work, when the data was stored as
individual snapshots on disk, although less redundant work is needed
in that case.}

{\bf Vertex subgraph} performance was evaluated by specifying a
condition on the $length(a.attr)<t$ of the vertex attribute, with
different values of $t$ leading to different selectivity.  This
experiment was executed for wiki-talk (with $username$ as the
property) and for nGrams (with $word$ as the property).  Twitter has
no vertex attributes and was not used in this experiment.
Figure~\ref{fig:subgraphngrams} shows performance for \sg and VE on
nGrams (wiki-talk results in Appendix~\ref{sec:app2}).  Performance on
\sg is a function of the number of intervals and is insensitive to the
selectivity.  The behavior on VE is dominated by FK enforcement:
with high selectivity (few vertices) broadcast join affords
performance linear in the number of edges, whereas for a large number
of vertices broadcast join is infeasible and a hash-join is used
instead, which is substantially slower.  VE provides an order of
magnitude better performance than \sg: up to 3 min with hash-join and
up to 15 min with broadcast join for VE, in contrast to between 95 and
200 min for \sg.

{\bf Map}\eat{ We evaluate \insql{map} performance by varying the data
  size through the \insql{slice} operation.} exhibits a similar trend
as \insql{slice}: constant running time for VE and a linear increase
in running time with increasing number of representative graphs for
\sg (Figure~\ref{fig:project} for wiki-talk in
Appendix~\ref{sec:app2}, similar for other datasets).  Performance of
\insql{map} is slightly worse than that of \insql{slice} because
\insql{map} must coalesce its output as the last step, while
\insql{slice} does not.

\begin{table}
\caption{Experimental datasets.}
\vspace{-0.1in}
\small
\begin{tabular}{l | c | c | c | c }
\hline
\multicolumn{1}{l|}{\bfseries Dataset} & \multicolumn{1}{c|}{\bfseries |V|} & \multicolumn{1}{c|}{\bfseries |E|} & \multicolumn{1}{c|}{\bfseries Time Span} & \multicolumn{1}{c}{\bfseries Evol. Rate} \\ \hline
wiki-talk-en & 2.9M & 10.7M & 2002--2015 & 14.4 \\ \hline
nGrams & 29.3M & 2.5B & 1520--2008 & 16.67 \\ \hline
twitter & 505.4M & 23B & 2006--2012 & 88 \\ \hline
\end{tabular}
\vspace{-0.1cm}
\label{tab:datasets}
\end{table}

\begin{figure*}[t]
\begin{minipage}[b]{2.1in}
\centering
\includegraphics[width=2.1in]{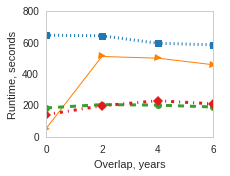}
\vspace{-0.2in}
\caption{Union on wiki-talk.}
\label{fig:union1}
\vspace{-0.1in}
\end{minipage}
\begin{minipage}[b]{2.4in}
\centering
\includegraphics[width=2.4in]{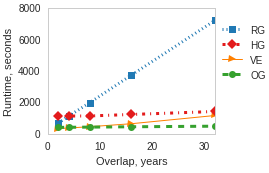}
\vspace{-0.2in}
\caption{Intersection on nGrams.}
\label{fig:intersectngrams}
\vspace{-0.1in}
\end{minipage}
\begin{minipage}[b]{2.2in}
\centering
\includegraphics[width=2.4in]{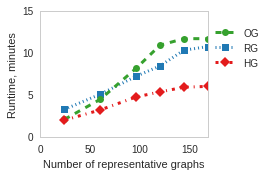}
\vspace{-0.2in}
\caption{Components on wiki-talk.}
\label{fig:ccwiki}
\vspace{-0.1in}
\end{minipage}
\end{figure*}

{\bf Aggregation} performance was evaluated on the graph-based
representations with a computation of vertex degrees, varying the size
of the temporal window obtained with slice.  The results in
Figure~\ref{fig:degtwitter} indicate that materialization of each
representative graph required for \sg makes it not a viable candidate
for this operation, especially over large datasets.  Both \og and \hg
exhibit linear increase in performance as the slice size is increased,
with a small slope.  Similar performance was observed in the other
datasets.

{\bf Node creation} performance was evaluated on all representations,
since all have different implementations of this operator.  We
executed topology-only creation (no attributes), varying the size
of the temporal window.  We observe that performance depends heavily
on the quantification, and on the data evolution rate.  \og is an
aggregated data structure with good temporal locality and thus in most
cases provides good performance and is insensitive to the temporal
window size (Figure~\ref{fig:agg1}).  However, in datasets with a
large number of representative graphs (such as nGrams), \og is slow on
large windows, an order of magnitude worse than VE in the worst case
(Figure~\ref{fig:agg2}).  VE outperforms \og when vertex and edge
quantification levels match (Figure~\ref{fig:agg1}), but is worse than
\og when vertex quantification is stricter than edge quantification
and FK must be enforced (Figure~\ref{fig:agg4}).  \og also outperforms
VE when both evolution rate is low and aggregation window is small
(Figure~\ref{fig:agg1}, wiki-talk).\eat{ \sg and \hg do not provide
  the best performance on any of our datasets.}

{\bf Union, intersection, and difference} by structure were evaluated
by loading two time slices of the same dataset with varying temporal
overlap.  Performance depends on the size of the overlap (in the
number of representative graphs) and on the evolution rate.  VE has
best performance when overlap is small (Figure~\ref{fig:union1})\eat{
  and when the evolution rate is high (Figure~\ref{fig:union2}),
  regardless of the size of the overlap}.  \og always has good
performance, constant w.r.t. overlap size.  This is expected, since
\og \insql{union} and \insql{intersection} are implemented as joins
(outer or inner) on the vertices and edges of the two operands.  VE,
on the other hand, splits the coalesced vertices/edges of each of the
two operands into intervals first, takes a union, and then reduces by
key.  When evolution rate is low and duration of an entity is high,
such as in wiki-talk for vertices, the split produces a lot of tuples
to then reduce, and performance suffers (Figure~\ref{fig:union1}). \sg
only has good performance on \insql{intersection} when few
representative graphs overlap, and never on \insql{union}
(Figure~\ref{fig:intersectngrams}). \hg performance is worse than \og,
by a constant amount in \insql{union}, and diverges in
\insql{intersection}.  \eat{\insql{difference} performance is similar
  to \insql{intersection}.}

{\bf Analytics.}  We implemented PageRank (PR) and Connected
Components (CC) analytics for the three graph-based representations
using the Pregel GraphX API.  PR was executed for 10 iterations or
until convergence, whichever came first. CC was executed until
convergence with no limit on the number of iterations.  Performance of
Pregel-based algorithms depends heavily on the partitioning strategy,
with best results achieved where cross-partition communication is
small~\cite{MoffittTempWeb16}.  For this reason, we evaluated only
with the E2D strategy.
Performance was evaluated on time slices of varying size.  Recollect
that analytics are essentially multiple rounds of aggregate
operations, so the performance we observe is an amplified version of
aggregate performance.  For a very small number of graphs (1-2), \sg
provides good performance, but slows down linearly as the number of
graphs increases.  \hg provides the best performance on analytics
under most conditions, with a linear increase but a significantly
slower rate of growth.  The tradeoff between \og and \hg depends on
graph evolution characteristics.  If the graph is a growth-only
evolution (such as in Twitter), \og is not denser than \hg and
computes everything in a single batch, which leads to the fastest
performance, as can be seen in Figure~\ref{fig:pranktwitter}.  If the
edge evolution represents more transient connections, then \hg is less
dense and scales better (Figure~\ref{fig:ccwiki}).  Note that \og and
\hg performance could be further improved by computing them over
coalesced structure-only V and E, and ignoring attributes.

{\bf In summary,} no one data structure is most efficient across all
operations.  This opens the door to query optimization based on the
characteristics of the data such as graph evolution rate and on the
type of operation being performed.  

\subsection{Switching between representations}
\label{sec:exp:scale}

\begin{figure*}[t]
\centering
\begin{minipage}{2.2in}
\centering
\includegraphics[width=2.1in]{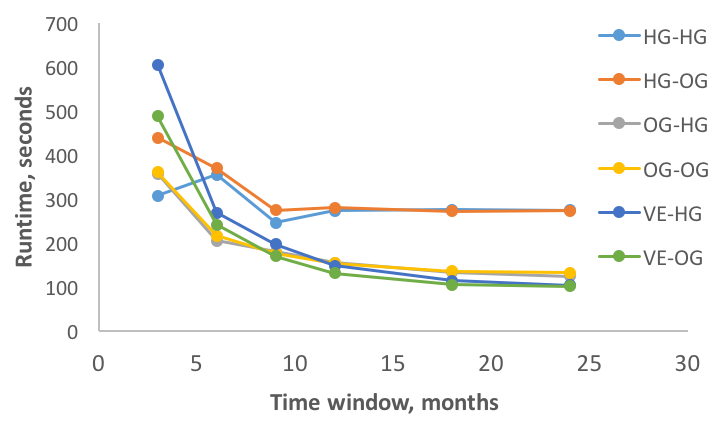}
\caption{$\insql{node}^T_a$ followed by components.}
\label{fig:ncrtocc}
\vspace{-0.1in}
\end{minipage}
\begin{minipage}{2.2in}
\includegraphics[width=2.2in]{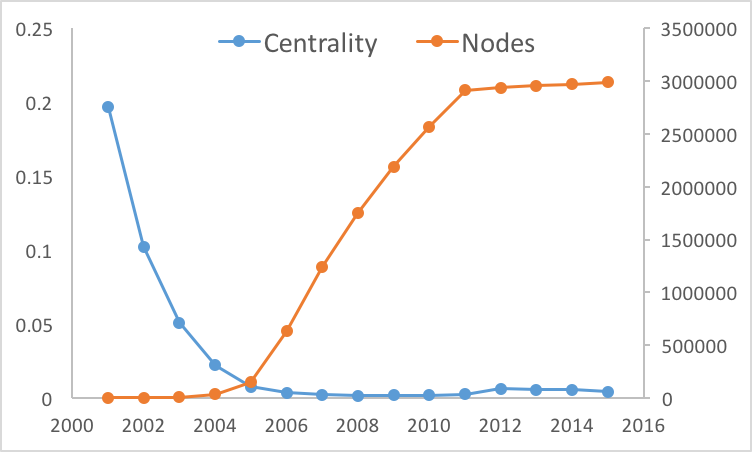}
\caption{In-degree centrality with 1 year resolution.}
\label{fig:central}
\end{minipage}
\begin{minipage}{2.2in}
\includegraphics[width=2.2in]{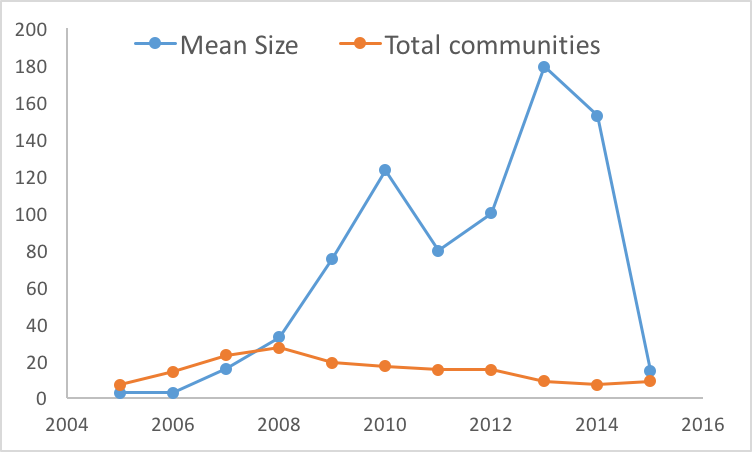}
\caption{Communities with 1 year resolution.}
\label{fig:commun}
\end{minipage}
\end{figure*}

To treat the four representations as access methods, we need to be
able to switch between them.  The data structures can be created from
outputs of any of the four, at a cost.  To investigate the feasibility
of switching between representations, we executed two-operator queries
and either kept the representation constant or changed it between the
operators.  The query is based on the first two steps of example three
in our motivating use cases: node creation over temporal windows in
wiki-talk, followed by the connected components analytic.
Figure~\ref{fig:ncrtocc} shows the result of varying the size of the
temporal window.  Recall that \og is the best performing
representation for node creation at small windows and \hg for
components over this dataset.  The benefits of \hg are substantially
consumed by switching: the performance of \og-\og and and \og-\hg are
similar.  If the cost of switching was negligible, then \og-\hg should
have exhibited notably better performance than all other combinations.
However, the \og-\hg still performs best over-all, indicating that
switching is feasible.

\subsection{Use cases}
\label{sec:exp:cases}

To see how our algebra handles the use cases from
Section~\ref{sec:cases}, we implemented each one over the wiki-talk
dataset.  Each example requires a sequence of operators.  For each
operator we used the best performing data structures based on the
comparison experiments described above.

Example 1 answers the question of whether there are high influence
nodes and whether that behavior is persistent in time.  The code to
compute the answer is 4 lines of a Scala program and the query took 76
seconds to execute.  The results show that from 25 nodes with mean
degree of 40 and above that have persisted for at least 6 months, 6
have coefficient of variation below 50, which is quite low, and only 5
have it above 100.  This indicates that there are in fact high
in-degree nodes and that they continue to be influential over long
periods of time, despite the loose connectivity of the overall
network.

Example 2 examines how the graph centrality changes over time.  The
program is 6 lines of Scala code iterating with temporal windows of 1,
2, 3, 6, and 12 months, and the analysis took 25 minutes.  Results
show that regardless of the temporal resolution, the in-degree
centrality is extremely low, about 0.04.  Figure~\ref{fig:central}
provides an explanation -- as the size of the graph increases, its
centrality decreases.  Given that the number of edges in this graph is
only about 4 times the number of nodes, the graph is too sparse and
disjointed to have any centrality.

Finally, example 3 examines whether communities can be detected in the
wiki-talk network at different temporal resolution.  The program,
similar to the one above, is 6 lines of Scala code with varied
temporal windows.  The total runtime is 58 minutes.  Communities,
defined as connected components, can be detected in all temporal
resolutions.  As a reminder, the edge quantification in this query is
\insql{always}, so only edges that persist over each window are
retained.  The presence of communities even with large temporal
resolution indicates that communities form and persist over time.
Figure~\ref{fig:commun} shows the mean size of all communities by time
and their total number.  The peaks of the mean size, visible in all
temporal windows, may indicate that communities form and then reform
in a different configuration, perhaps for a different purpose.  The
results of this analysis can serve as a starting point to investigate
the large communities and what caused the size shifts.

{\bf In summary,} complex analyses can be expressed as queries in \ql
and lead to interesting insights about the evolution of the underlying
phenomena.

\section{Related Work}
\label{sec:related}

{\bf Evolving graph models.}  Much recent work
represents evolving graphs as sequences of snapshots in a discrete
time domain, and focuses on snapshot retrieval and
analytics~\cite{Khurana2013,Miao2015,Ren2011}.  Our logical model is
semantically equivalent to a sequence of snapshots because in point
semantics snapshots can be obtained with a simple slice over all time
points.  We choose to represent \tgs as a collection of vertices and
edges because the range of operations we support is naturally
expressible over them but not over a sequence of snapshots.  For
example, subgraph with a temporal predicate is impossible to express
over snapshot sequences as each snapshot is nontemporal and
independent of the others.

{\bf Querying and analytics.} There has been much recent work on
analytics for evolving graphs,
see~\cite{DBLP:journals/csur/AggarwalS14} for a survey. This line of
work is synergistic with ours, since our aim is to provide systematic
support for scalable querying and analysis of evolving graphs.

Several researchers have proposed individual queries, or classes of
queries, for evolving graphs, but without a unifying syntax or general
framework.  The proposed operators can be divided into those that
return temporal or nontemporal result.  Temporal operators include
retrieval of version data for a particular node and
edge~\cite{George2006}, of journeys~\cite{George2009,Casteigts2011},
subgraph by time or attributes~\cite{Huo2014,Khurana2016}, snapshot
analytics~\cite{Miao2015,Labouseur2015,Khurana2016}, and computation
of time-varying versions of whole-graph analytics like maximal
time-connected component~\cite{Ferreira2004} and dynamic graph
centrality~\cite{Lerman2010}.  Non-temporal operators include snapshot
retrieval~\cite{Khurana2013} and retrieval at time
point~\cite{George2009,Khurana2016}.

Our contribution is to propose one unifying compositional \tga that
covers the range of the operations in a complete way, with clear
semantics, which many previous works lack.

{\bf Implementations.}  Three systems in the literature focus on
systematic support of evolving graphs, all of them non-compositional.
Miao et al.~\cite{Miao2015} developed ImmortalGraph (formerly
Chronos), a proprietary in-memory execution engine for temporal graph
analytics.  The ImmortalGraph system has no formal model, but
informally an evolving graph is defined as a series of activities on
the graph, such as node additions and deletions.  This is a streaming
or delta approach, which is popular in temporal databases because it
is unambiguous and compact.  ImmortalGraph does not provide a query
language, focusing primarily on efficient physical data layout.  Many
insights about temporal vs. structural locality by~\cite{Miao2015}
hold in our setting.  The batching method for snapshot analytics used
by \og is similar to the one proposed in ImmortalGraph.  However,
ImmortalGraph was developed with the focus on centralized rather than
distributed computation and~\cite{Miao2015} does not explore the
effect of distribution on batching performance.

The G* system~\cite{Labouseur2015} manages graphs that correspond to
periodic snapshots, with the focus on efficient data layout.  It takes
advantage of the similarity between successive snapshots by storing
shared vertices only once and maintaining per-graph indexes.  Time is
not an intrinsic part of the system, as there is in \tga, and thus
temporal queries with time predicates like node creation are not
supported.  G* provides two query languages: procedural query language
PGQL, and a declarative graph query language (DGQL). PGQL provides
graph operators such as retrieving vertices and their edges from disk
and non-graph operators like aggregate, union, projection, and join.
All operators use a streaming model, i.e. like in traditional DBMS,
they pipeline.  DGQL is similar to SQL and is converted into PGQL by
the system.

Finally, the Historical Graph Store (HGS) system is an evolving graph
query system based on Spark.  It uses the property graph model and
supports retrieval tasks along time and entity dimensions through Java
and Python API.  It provides a range of operators such as selection
(equivalent to our subgraph operators but with no temporal
predicates), timeslice, nodecompute (similar to map but also with no
temporal information), as well as various evolution-centered
operators.  HGS does not provide formal semantics for any of the
operations it supports and the main focus is on efficient on-disk
representation for retrieval.

None of the three systems are publicly available, so direct
performance comparison with them is not feasible.

\section{Conclusions and Future Work}
\label{sec:conc}

In this paper we presented \tga: a tuple-stamped vertex and edge
relational model of evolving graphs and a rich set of operations with
point semantics.  TGA is TRA-vertex- and -edge-complete.  We show
reduction of each of our operations into TRA and further into RA with
additional primitives.

It is in our immediate plans to develop a declarative syntax for \tra,
making it accessible to a wider audience of users.  We described an
implementation of \ql in scope of Apache Spark, and studied performance
of operations on different physical representations.  Interestingly,
different physical implementations perform best for different
operations but support switching, opening up avenues for rule-based
and cost-based optimization.  Developing a query optimizer for \ql is
in our immediate plans.

\newpage

\bibliographystyle{abbrv}
\bibliography{temporal}

\appendix 

\section{Additional examples.}
\label{sec:app:examples}

Figure~\ref{fig:tg_inter} shows the result of temporal intersection
of \insql{T1} with \insql{T2}.  Only the vertices and edges present in
both \tgs are produced, thus eliminating $v_3$ and $v_4$.  Period
$[2/15, 4/15)$ for $v_2$ is computed as a result of the join of
$[2/15, 5/15)$ in \insql{T1} and [$2/15, 4/15)$ in \insql{T2}.

Figure~\ref{fig:tg_diff} shows the result of temporal difference
of \insql{T1} with \insql{T2}.  Vertex v1 is removed between 2/15 and
6/15, splitting one v1 tuple in \tv of T1 into two temporally-disjoint
tuples in the result.

\begin{figure}[b]
\centering
\begin{subfigure}{3in}
\includegraphics[width=2.8in]{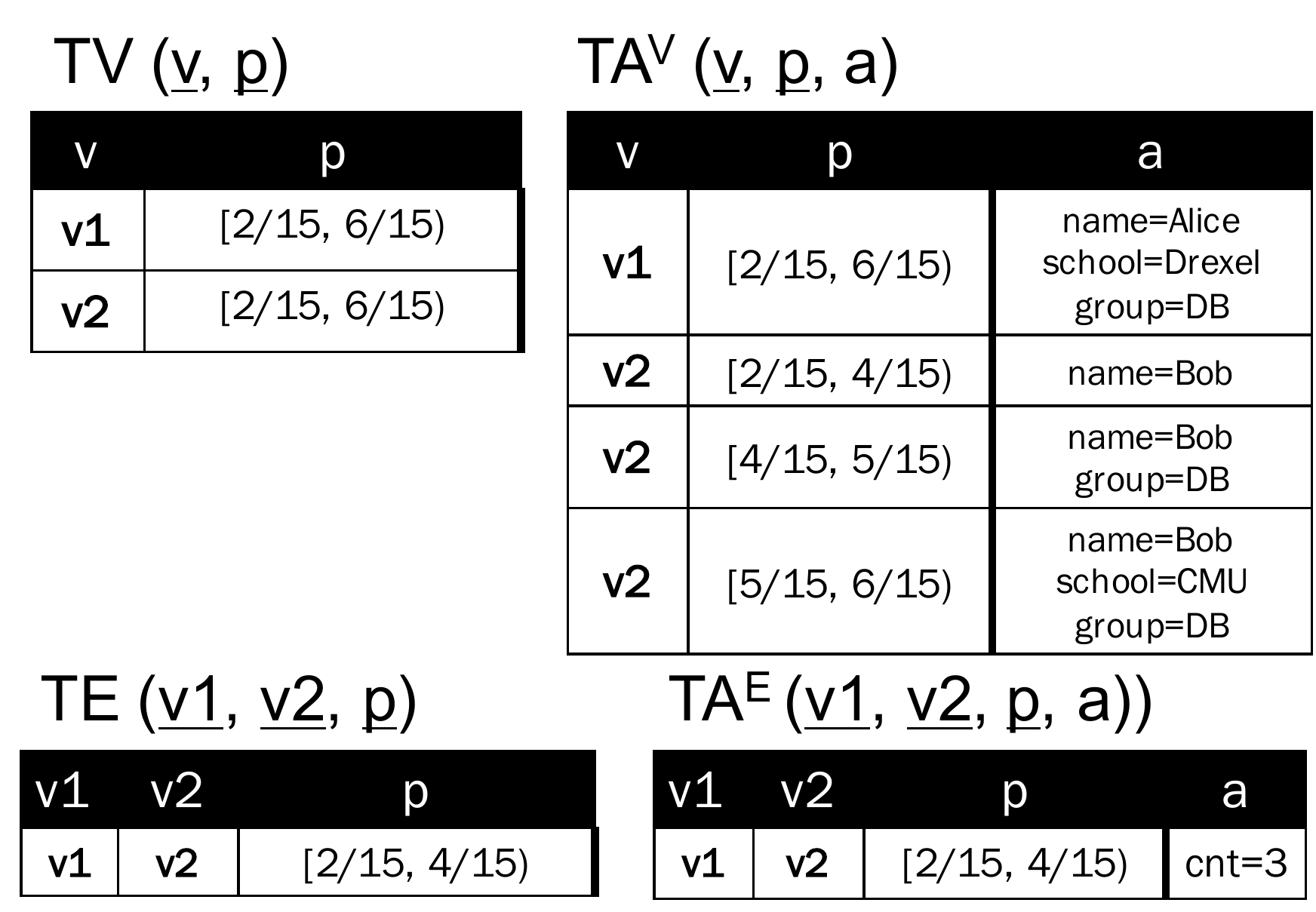}
\caption{$T1 \cap^T T2$.}
\label{fig:tg_inter}
\end{subfigure}
\begin{subfigure}{3in}
\includegraphics[width=2.8in]{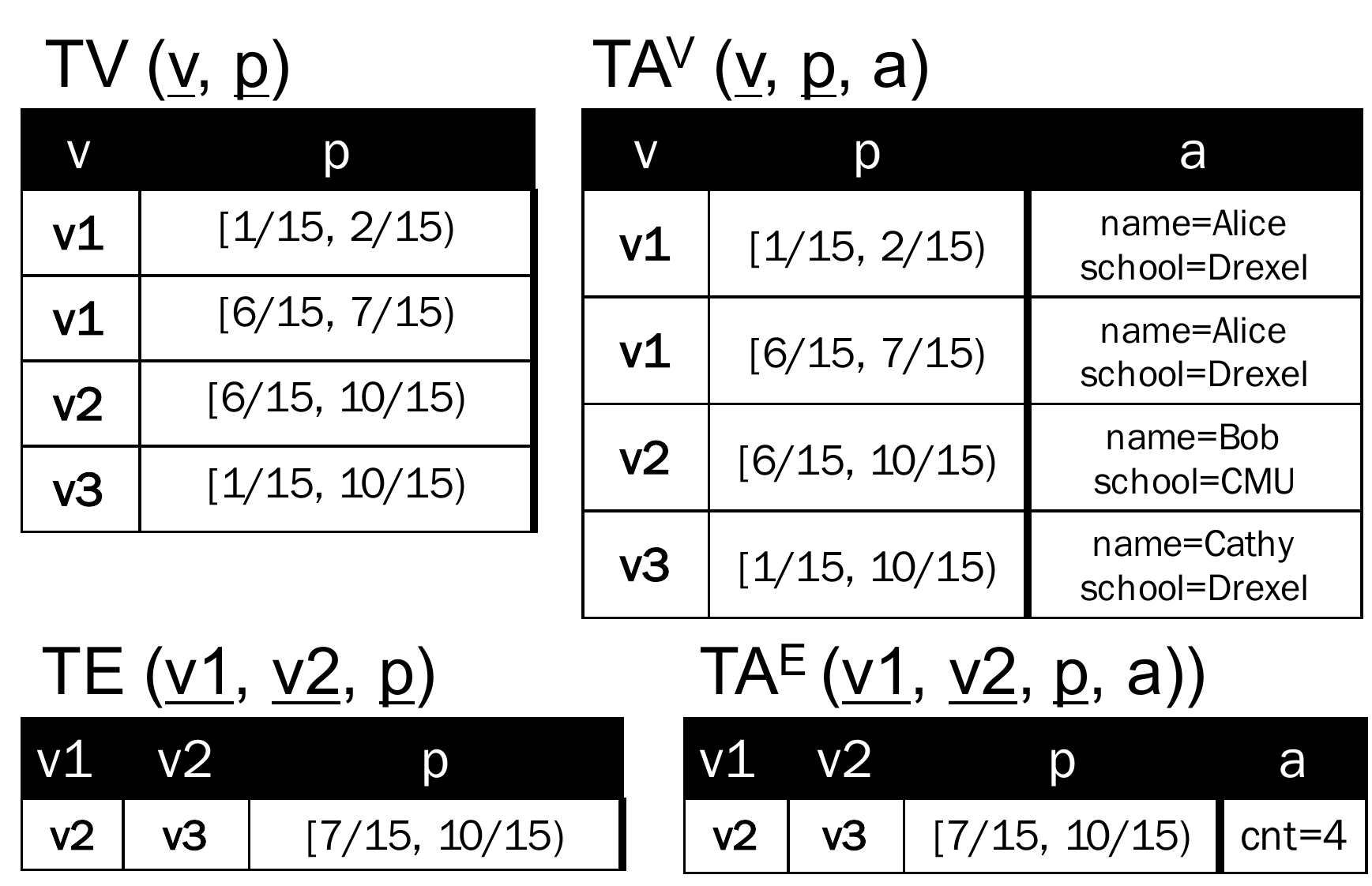}
\caption{$T1 \setminus^T T2$.}
\vspace{-0.2cm}
\label{fig:tg_diff}
\end{subfigure}
\caption{Binary operators.}
\label{fig:binary}
\vspace{-0.2cm}
\end{figure}

\section{Additional results.}
\label{sec:app2}

\begin{figure*}
\begin{minipage}[b]{2.2in}
\centering
\includegraphics[width=2.1in]{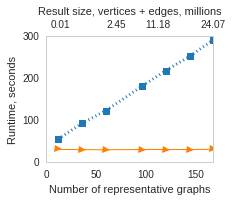}
\caption{Slice on wiki-talk.}
\label{fig:slicewiki}
\end{minipage}
\begin{minipage}[b]{2.2in}
\centering
\includegraphics[width=2.1in]{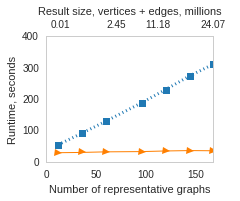}
\caption{Map on wiki-talk.}
\label{fig:project}
\end{minipage}
\begin{minipage}[b]{2.2in}
\centering
\includegraphics[width=2.4in]{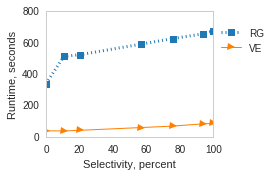}
\caption{Subgraph on wiki-talk.}
\label{fig:subgraphwiki}
\end{minipage}
\end{figure*}

\begin{figure*}
\centering
\begin{minipage}[b]{2in}
\centering
\includegraphics[width=2in]{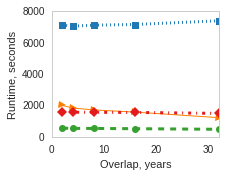}
\caption{Union on nGrams.}
\label{fig:union2}
\end{minipage}
\begin{minipage}[b]{2.3in}
\centering
\includegraphics[width=2.3in]{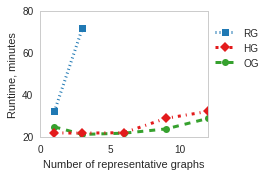}
\caption{PageRank on Twitter.}
\label{fig:pranktwitter}
\end{minipage}
\begin{minipage}[b]{2.3in}
\centering
\includegraphics[width=2.3in]{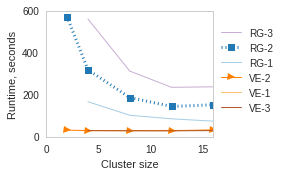}
\caption{Scaling slice on wiki-talk.}
\label{fig:slicescale}
\end{minipage}
\end{figure*}

\begin{figure*}
\centering
\begin{minipage}{2.4in}
\includegraphics[width=2.4in]{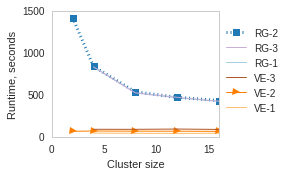}
\caption{V-subgraph on wiki-talk.}
\label{fig:selectscale}
\end{minipage}
\end{figure*}

Plots and discussion in this section complement experimental results
presented in Section~\ref{sec:exp}.

Figure~\ref{fig:slicewiki} shows performance of VE and \sg
on \insql{slice} over wiki-talk.  It exhibits the same trend as on the
nGrams dataset in Figure~\ref{fig:slicengrams} but on a smaller scale.

Figure~\ref{fig:subgraphwiki} shows performance of VE and \sg
on \insql{vertex subgraph} over wiki-talk.  Wiki-talk dataset is small
enough that broadcast join can be used for constraining the edges, so
the sudden worsening of performance is not observed, as it is in
Figure~\ref{fig:subgraphngrams}.

We next examine how the different access methods scale with the size
of the cluster.  We varied the number of cluster workers while
executing individual operations.

To examine the performance on slice, we fixed the slice interval size
to be 4, 8, and 14 years (series -1, -2, and -3, respectively).  As
can be seen in Figure~\ref{fig:slicescale}, the performance of VE did
not change with the cluster size or the size of the interval.  Our
cluster stores the graph files on HDFS with no replication, and Spark
does not currently support filter pushdown on dates (this is being
addressed in one of the upcoming releases), so these results are
expected as the operation is essentially just a file scan.  \rg
performance did improve as the cluster grew, with the biggest
reduction occurring by 8 slaves and diminishing returns thereafter. 

Similar trends can be seen on vertex subgraph in
Figure~\ref{fig:selectscale}, where we fixed the query selectivity to
be 20, 57, and 100\%.  There is no observable difference between
different selectivity for \rg, which is consistent with the subgraph
experiment results.

We do not include results for every operation here as they all show
the same trends -- performance rapidly improves with increased cluster
size up to a point and adding additional slaves is not beneficial
thereafter.  \eat{The final result in Figure~\ref{fig:ccscale} shows
  the performance on connected components analytic with 2, 5, and 8
  years.  The very small cluster sizes led to executions that were
  stopped due to taking too long.}

\end{document}